\newtheorem{theorem}{Theorem}[section]
\newtheorem{lemma}[theorem]{Lemma}
\newtheorem{proposition}[theorem]{Proposition}
\newtheorem{corollary}[theorem]{Corollary}
\newtheorem{remark}[theorem]{Remark}
\theoremstyle{definition}
\newtheorem{definition}[theorem]{Definition}
\newtheorem{assumption}[theorem]{Assumption}
\theoremstyle{definition}
\DeclareMathOperator*{\esssup}{ess\,sup}
\newcommand{\Rbb}{{\mathbb R}}
\newcommand{\Ebb}{{\mathbb E}}
\newcommand{\Fbb}{{\mathbb F}}
\newcommand{\Hc}{\mathcal{H}^2}
\newcommand{\Hci}{\mathcal{H}^{\infty}}
\newcommand{\Ac}{{\mathcal A}}
\newcommand{\Pbf}{{\mathbf P}}
\newcommand{\Fc}{{\mathcal F}}
\newcommand{\Wc}{{\mathcal W}}
\newcommand{\de}{{\delta}}
\newcommand{\De}{{\Delta}}
\newcommand{\bde}{{\bar \de}}
\newcommand{\Si}{{\Sigma}}
\newcommand{\si}{{\sigma}}
\newcommand{\la}{{\lambda}}
\newcommand{\La}{{\Lambda}}
\newcommand{\pnf}{{\widehat{\phi}}} 
\newcommand{\pnfp}{{\widetilde{\phi}}} 
\newcommand{\pnfn}{{\breve{\phi}}} 
\newcommand{\plm}{{\phi_{\La,m}}} 
\newcommand{\pdlm}{{\dot{\phi}_{\La,m}}} 
\newcommand{\pfp}{{\widetilde{\phi}_{\La,n}}} 
\newcommand{\pfdp}{{\dot{\widetilde{\phi}}_{\La,n}}} 
\newcommand{\pfnm}{{\breve{\phi}_{\La,m}}} 
\newcommand{\pfdnm}{{\dot{\breve{\phi}}_{\La,m}}}
\newcommand{\m}{{\mu}} 
\newcommand{\mnfp}{{\m_{n}}} 
\newcommand{\mnfn}{{\breve{\m}}} 
\newcommand{\mfn}{{\breve{\m}_{\La}}} 
\newcommand{\z}{\zeta} 
\newcommand{\II}{\mathcal{I}}
\newcommand{\DD}{\mathcal{D}}
\begin{document}
\title[Continuous-time Equilibrium in Markets with Price Impact \& Transaction Costs]{Continuous-time Equilibrium Returns in Markets with Price Impact and Transaction Costs}

\author{Michail Anthropelos}
\address{Department of Banking and Financial Management\\
University of Piraeus\\
Piraeus, Greece}
\email{anthropel@unipi.gr}
\thanks{We would like to thank Alexandra Chronopoulou, Constantinos Kardaras, Johannes Muhle-Karbe, Sergio Pulido, Scott Robertson and Gordan \v{Z}itkovi\'{c} for their valuable comments and suggestions.}

\author{Constantinos Stefanakis}
\address{Department of Banking and Financial Management\\
University of Piraeus\\
Piraeus, Greece}
\email{kstefanakis@unipi.gr}
\thanks{C. Stefanakis is supported in part by the Research Center of the University of Piraeus}

\begin{abstract}
We consider an It\^{o}-financial market at which the risky assets' returns are derived
endogenously through a market-clearing condition amongst heterogeneous risk-averse investors with quadratic preferences and random endowments. Investors act strategically by taking into account the impact that their orders have on the assets' drift. A frictionless market and an one with quadratic transaction costs are analysed and compared. In the former, we derive the unique Nash equilibrium at which investors' demand processes reveal different hedging needs than their true ones, resulting in a deviation of the Nash equilibrium from its competitive counterpart. Under price impact and transaction costs, we characterize the Nash equilibrium through the (unique) solution of a system of FBSDEs and derive its closed-form expression. We furthermore show that under common risk aversion and absence of noise traders, transaction costs do not change the equilibrium returns. On the contrary, when noise traders are present, the effect of transaction costs on equilibrium returns is amplified due to price impact. 
\ \\ 

\noindent \textbf{Mathematics Subject Classification}: 91G10, 91G15, 91G30.
\ \\ 

\noindent \textbf{JEL Classification}: C68, D53, G11, G12.
\ \\ 

\noindent \textbf{Keywords}: equilibrium returns, transaction costs, price impact, FBSDEs, Nash equilibrium.

\end{abstract}

\maketitle

\vspace{1.5cm}

\bigskip

\section{Introduction}

It has been an undeniable fact that in several contemporaneous financial markets there are large institutional investors whose trading strategies impact the assets' prices (see among others the empirical studies in \cite{KoiYog19} and \cite{HuHofLanTim21}). Motivated by the related statistical evidence, a large body of theoretical research on price-impact modelling has been recently developed both in static and dynamic settings (for extended literature reviews on price-impact models we refer to \cite{RosYoo22} and \cite{price_impact}).   

The majority of the existing models in this area, however, do not take into account a prevalent force in the markets: the exogenous transaction costs. Especially when a market is thin, i.e.~when the average trading volume is low, transaction costs bedevil the investors in several forms, such as broker and trading platform's  fees (\cite{FraIsrMosk18}) and searching-time for counterpart (\cite{DuffGarPed05}). Such transaction costs can be seen as an additional type of market's frictions on top of its non-competitiveness.  

In this paper, we study a continuous-time market populated by strategically behaved investors who exploit their price impact, while being charged with transaction costs on their trading. More precisely, we assume that risky assets follow It\^{o} dynamics, with drifts that are endogenously derived at the equilibrium. The presence of noise traders is allowable in the model, while the main market's participants are mean-variance investors endowed with heterogeneous random income processes. Driven by their hedging needs, investors trade the risky assets and the market-clearing condition determines the equilibrium drift of assets' return. In contrast to the related literature (see among others \cite{Bas00}), we assume (as in \cite{RosWer15}) that all investors are strategic, in the sense that each one takes into account her impact to market equilibrium when submitting an order. Unlike \cite{RosWer15} however, we impose (as in \cite{C}) quadratic transaction costs on investors' trading, which we argue it is a more appropriative market setting for thin financial markets. To the best of our knowledge, this is the first attempt to model a continuous-time market equilibrium with the simultaneous presence of transaction costs and endogenous price impact.

\subsection*{Main contributions}

The paper analyses and compares two main market settings: one without transaction costs (also called frictionless) and another with the presence of transaction costs (also called frictional market). We use the frictionless market as a benchmark to highlight the effect of transaction costs on market equilibrium, but also to introduce in a more clear manner the way we model investors' price impact. For this, we write the market's equilibrium return as a function of each investor's demand given that the rest of the investors act optimally (similarly to the static models of \cite{A} and \cite{B}). This creates a best-response function which in turn leads to a continuous-time Nash equilibrium for market returns. 

We derive these equilibrium returns in closed-form and verify that investors' price impact  does always change the market's drift when noise traders are present (even under common risk aversion). In fact, at the equilibrium trading, price impact makes the investors reveal different hedging needs than their true ones. In particular, this leads to lower non-competitive equilibrium returns (and hence higher equilibrium prices) when noise traders buy the risky assets (from equally risk tolerant investors). Consistent to the related literature (\cite{MalRos17} and \cite{AnthKard24}), we also show that Nash equilibrium produces better utility gains for investors with sufficiently low risk aversion, when compared with the associate competitive equilibrium. 

We then apply this price-impact concept in a market with quadratic transaction costs. For such extension, we have to deal with a number of challenges. It is well known that equilibrium models with transaction costs are intractable, because trading costs complicate the investors' individual optimization objectives. In our case, the problem is compounded since there is no general explicit expression of the competitive equilibrium returns in the market, which means that there is no direct way to identify investors' price impact. For this, we restrict the class of admissible strategies and model's parameters and assume common risk tolerance, which in turn leads to the (frictional) best-response strategy satisfying an explicitly solvable coupled but linear FBSDE. Building on this, we characterize the investors' Nash equilibrium demands through an associate system of linear FBSDEs, which admits an explicit solution. We then derive a closed-form expression of the (frictional) Nash equilibrium returns.

The closed-form formulas allow us to compare the market's return under different types of equilibria in two directions: with and without transaction costs and with and without price impact. The impact on investors' frictional objectives comes both in their linear and quadratic parts. The linear part exploits solely the direction of noise traders demand, while the quadratic part indicates the negative effect of transaction costs on investor's strategic behavior. On the other hand, we show that investors' price impact increases the part of the equilibrium return that stems from transaction
costs (when more than two investors participate in the market). In the special case of only two
strategic investors, the effect of transaction costs on equilibrium returns is equal between competitive and non-competitive market's structure. We also highlight the importance of homogeneous risk preference on equilibrium comparison. In particular, when investors have common risk aversion and there is no noise traders in the market, we show that frictional and frictionless Nash equilibrium coincide (a result that holds under competitive market structure too).


\subsection*{Connection with the related literature}

Our paper is linked with two strands of the literature: equilibrium models with price impact and the optimal investment and equilibrium pricing under transaction costs. 

There different categorisations of price-impact models (with or without time-dynamic setting) regarding the source of price impact and the (a)symmetry of participating investors. On one hand, there is a vast literature that assumes exogenous price impact, e.g.~\cite{AlmCh00}, \cite{AlmThu05}, \cite{HubSta04} and \cite{SchZha19}. Therein, specific assumptions are used to justify the nature of the price impact that could be permanent or temporary.  In contrast to this approach, we follow the literature that starts from the seminal work of \cite{Kyl85} and considers an endogenously derived price impact. In fact, similarly to the static models of \cite{A}, \cite{Viv11} and \cite{MalRos17} and the time-dynamic ones in \cite{Vay99} and \cite{RosWer15}, all investors, except the noise ones, have symmetric bargaining power and internalize their ability to affect the equilibrium returns. 

There are two main advantages of imposing endogenous price impact: the first is that there is no need to impose an exogenous parametrization of impact and second that we can analyse the equilibrium price impact in terms of investors' characteristics such as their risk exposures, risk preferences and/or even their population. 

Our work contributes also on equilibrium market models with transaction costs incorporating a non-competitive market structure. Under that perspective, we extend the equilibrium models of \cite{C} and \cite{D} and show that price-impact not only can be taken into account, but also the equilibrium outcomes can be analytically and qualitatively compared. Equilibrium models with dynamic trading and transaction costs, however without assuming price-impact, have been developed both in discrete-time setting (e.g.~\cite{BusDum19}) and under continuous-time (e.g.~\cite{Vay98}). Equilibrium continuous-time models with transaction costs are also developed in \cite{VayVil99} and \cite{West18}, but under simplified settings with deterministic asset prices.

On the other hand, Nash equilibria under price impact and transaction costs in an dynamic setting have studied in \cite{SchZha19} and \cite{LuoSch19} and more recently extended in \cite{CorLil24}, under the notion of price impact game. In contrast to our setting, these works consider exogenous price impact, while the trading occurs in discrete times.

\subsection*{Structure of the paper}
This paper is organized as follows: Section \ref{sec:setup} presents the model's ingredients and the simple case of the competitive equilibrium with no transaction costs. In Section \ref{sec:Price_Impact_NoFri}, we introduce the price impact modelling in a frictionless market, derive the closed-form expression of the Nash equilibrium returns and comment on its intuition. In Section \ref{sec:Frictions}, we generalize the model by including quadratic transaction costs and solve the system of FBSDEs that characterizes the corresponding Nash equilibrium returns. Finally, in Appendix \ref{sec:appendixA} we provide the proofs of Section \ref{sec:Price_Impact_NoFri}, while the proofs of Section \ref{sec:Frictions} are given in Appendix \ref{sec:appendixB}. 



\bigskip
\section{The Model Set-up}\label{sec:setup} \normalfont

We fix a probability space $\big(\Omega,\Fc,\Pbf\big)$ equipped with an augmented filtration $\Fbb:=(\Fc(t))_{t\in[0,T]}$ generated by a standard $d$-dimensional Brownian motion $W$, and a finite time horizon $T>0$. We also fix a constant $r\geq 0$, which will stand for the time discount rate. 

\subsection{The market}

We consider a market with $d+1$ assets and their cumulative return processes $R=(R^{i}:0\leq i \leq d+1)$. The first one is exogenous and holds no risk, while the remaining assets are risky with returns modeled by the following $\Rbb^{d}$-valued It\^{o} process:
\begin{equation} \label{eq:returns}
\begin{bmatrix}dR^{1}(t) \\ \vdots \\ dR^{d}(t)\end{bmatrix}=\begin{bmatrix}\nu^{1}(t)\\ \vdots \\ \nu^{d}(t)\end{bmatrix}dt+\begin{bmatrix}\si_{1,1}&\cdots&\si_{1,d}\\ \vdots& &\vdots \\\si_{d,1}&\cdots&\si_{d,d}\end{bmatrix} \begin{bmatrix}dW^{1}(t) \\ \vdots \\ dW^{d}(t)\end{bmatrix}, \ \ \ R^{i}(0)=0, \  i\in\DD,
\end{equation}
where $\DD:=\{1,2,...,d\}$. Here, the $\Rbb^{d}$-valued (local) return process $\nu=(\nu^{i}:i\in\DD)$ is going to be endogenously determined at the market equilibrium, while the constant $\Rbb^{d\times d}$-valued infinitesimal covariance matrix $\Si:=\si\si^{'}$ is given exogenously and is assumed to be (symmetric) positive definite. 
Through \eqref{eq:returns} we also characterize the dynamics of the risky assets' price processes $S=(S^{i}:i\in\DD)$ as:
\begin{equation}\label{eq:stocks}
dS^{i}(t)=S^{i}(t)dR^{i}(t), \quad i\in\DD,
\end{equation}
together with the specified initial values $(S^{i}(0):i\in\DD)$ which are strictly positive. 
Regarding the riskless asset we require it pays and charges no interest, i.e.~we set $R^{0}= 0$ and $S^{0}= 1$. This means that we can consider the wealth and endowment processes at discounted terms \footnote{This is done for the sake of clarity since we would arrive at the same results by taking $R^0$ to be an arbitrary adapted, continuous process of finite variation with $R^0(0)=0$, $S^0$ satisfying: $dS^{0}(t)=S^{0}(t)dR^{0}(t)$ for $S^{0}(0)=1$ and considering $R^{i}-R^{0}, \ S^{i}/S^0, \ i\in\DD$, as our respective return and price processes, which once more satisfy \eqref{eq:stocks}. We can summarize the aforementioned as normalizing the riskless asset price process to one, and ``expressing everything else in its units", through discounting.}. 

\subsection{The investors}
We consider $N$ investors with quadratic preferences over their wealth at time terminal time $T$ and risk tolerance coefficients denoted by $\de_{m}>0$ for each $m\in \II:=\{1,2,...,N\} $. They are about to continuously trade in the market in order to hedge their exogenously given risks that stem from the fluctuations of their cumulative endowment processes:
\begin{equation}\label{eq:endowment}
dY_{m}(t)=(\z_{m}(t))^{'}\si dW(t), \ \ \ m\in\II. 
\end{equation}
In the above formulation of endowment, $W$ stands for the $d$-dimensional risk factor and $\z_{m}=(\z_{m}^{i}:i\in\DD)$ is an adapted process that describes the exposure of investor $m$ to the risk factor. The stochasticity of $\z_{m}$ for each $m\in\II$ implies that investors' exposures to market's risk factor change through time and hence their hedging needs change accordingly. In principle, this implies that even if the equilibrium transaction among investors is Pareto optimal, endowments' changes give rise to additional room for mutually beneficial trading. 

\begin{remark}
Considering only the diffusion part of the investors' random endowment in \eqref{eq:endowment} comes without loss of generality. Indeed, as stated in \cite{D}, since the solution to the investors' objective functional (see \eqref{eq:frless_objective} below) is invariant to any additional finite variation part or to endowment's shocks that are orthogonal to the market. In other words, under suitable integrability assumptions, \eqref{eq:endowment} could be generalized to:
\begin{equation*}
dY_{m}(t)=dA_{m}(t)+(\z_{m}(t))^{'}\si dW(t)+dM_{m}^{\perp}(t), \qquad m\in\II, 
\end{equation*}
where the $\Rbb$-valued, finite variation process $A_{m}$ models the cumulative cashflows not spanned by the assets and the $\Rbb$-valued orthogonal martingale $M_{m}^{\perp}$ stands for any unhedgeable shocks in the investors' endowment. 
\end{remark}

The amount of money that each investor invests on the risky assets is given by the $\Rbb^{d}$-valued processes $\phi_{m}=(\phi_{m}^{i}:i\in\DD)$, for each $m\in\II$, which shall be called  portfolios or demands.  Combining trading in the market with the endowment process and assuming no additional withdrawn or deposit of money until time $T$, we get that the dynamics of the each investor's total wealth is of the following form:
\begin{equation}\label{eq:wealth process}
dV_{m}(t)=(\phi_{m}(t))^{'}dR(t)+dY_{m}(t), \ \ \ 1\leq m \leq N.
\end{equation}

Although, it is not a necessity for the model, we consider the potential participation of noise liquidity providers, hereafter called noise traders, whose demand is exogenously given and denoted by the $\Rbb^{d}$-valued process $\psi$.

We will use the following class of processes for any $1\leq p <\infty$:
\begin{equation*}
\mathcal{H}^{p}:=\bigg\{X:[0,T]\times\Omega\rightarrow\mathbb{R}^{d}: \text{$X$ optional s.t.} \ \|X\|_{\mathcal{H}^{p}}:=\mathbb{E}\bigg[\int_0^{T}\|X(t)\|_{2}^{p}dt\bigg]^{1/p}<\infty\bigg\},
\end{equation*}
and impose the following standing assumption:
\begin{assumption}\label{ass:standing}
$\psi,\phi_{m},\nu$ and $\z_{m}\in \Hc,$ for all $ m \in\II.$
\end{assumption}

\begin{remark}
Note that we could also represent the investors' holdings in units of assets as: $\theta_{m}^{i}:=\phi_{m}^{i}/S^{i}$ for each $i\in\DD$ $ m \in\II$. That is, by the associativity of the semimartingale integral 
we have that \eqref{eq:wealth process} can also be written as:
\[
dV_{m}(t)=(\theta_{m}(t))^{'}dS(t)+dY_{m}(t), \quad  m \in\II.\]
\end{remark}

\subsection{Equilibrium returns in a competitive market with no transaction costs}

We start our equilibrium study with the simplest case of a competitive market without transaction costs. Although this equilibrium is not our main focus, we use it as a benchmark for equilibrium comparisons that helps the related intuition discussion. Herein, the objective of investor $m$ is to choose a portfolio $\phi$ through the maximization of the following quadratic functional $\mathcal{F}_{m}:\Hc\rightarrow\Rbb$ defined as:
\begin{equation*}
\mathcal{F}_{m}(\phi):=\Ebb\left[\int_0^{T}e^{-r t}\left((\phi(t))^{'}dR(t)+dY_{m}(t)-\frac{1}{2\de_{m}}d\left[ \int_0^{.}(\phi)^{'}dR(s)+Y_{m} \right](t)\right)\right], \quad m\in\II,
\end{equation*}
where the wealth dynamics are given by \eqref{eq:wealth process}. Hence, the optimization problem is written as:
\begin{equation}\label{eq:frless_objective}
\sup\limits_{\phi\in\Hc}\Ebb\bigg[\int_0^{T}e^{-r t}\Big((\phi(t))^{'}\nu(t)-\frac{1}{2\de_{m}}(\phi(t)+\zeta_{m}(t))^{'}\Si (\phi(t)+\zeta_{m}(t))\Big)dt\bigg], \quad m\in\II.
\end{equation}

\begin{remark}\label{rem:r}
The presence of the discount factor $e^{-rt}$ in optional, since we could simply set $r=0$. A strictly positive $r$ implies the presence of a penalizing term on investors' objective functional when the wealth is utilized later in time.  
\end{remark}

By the strict concavity of $\Fc_{m}(\phi)$ the unique solution of  \eqref{eq:frless_objective}, denoted by $\pnf_{m}$, is derived via a simple calculus of variations argument, 
and has the following representation:
\begin{equation}\label{eq:frless_comp_strategies}
\Hc\ni\pnf_{m}:=\de_{m}\Si^{-1}\nu-\z_{m}, \quad\forall m\in\II.
\end{equation}
The above investment strategy shall be called \textit{the competitive frictionless optimal portfolio} of investor $m$. Note that the first term of \eqref{eq:frless_comp_strategies} is the classic quadratic optimal portfolio (also called Merton's portfolio), while the second reflects the investor's hedging needs. 

Under competitive market structure with no transaction costs, the local return process $\nu$ of \eqref{eq:returns} is determined simply by the market-clearing condition, where a zero-net supply is imposed:  
\begin{equation}\label{eq:equilibrium_condition}
\pnf_1(t)+\cdots+\pnf_{N}(t)+\psi(t)=0, \ \ \ \forall \ t\in[0,T].
\end{equation}
Without transaction costs and when all investors are price takers, we readily get from \eqref{eq:frless_comp_strategies} and \eqref{eq:equilibrium_condition}, the explicit form for the equilibrium return process, henceforth called \textit{frictionless competitive equilibrium returns}:
\begin{equation}\label{eq:frless_comp_equilibrium}
\m:=\frac{\Si(\zeta-\psi)}{\de},
\end{equation}
where the aggregate investors' risk exposure and risk tolerance are defined as:
\begin{equation}\label{eq:aggregate}
\z:=\sum_{m=1}^{N}\z_{m}\quad \text{ and }\quad\de:=\sum_{m=1}^{N}\de_{m}. 
\end{equation}
Due to Assumption \ref{ass:standing}, we directly verify that $\m\in\Hc$. Equilibrium \eqref{eq:frless_comp_equilibrium} is indicative for the discussion that follows. Say $d=1$, for simplicity. Assuming that the aggregate investors' exposure to market shocks is higher than the corresponding noise demand, the  equilibrium returns increase when aggregate risk tolerance decreases and the asset's volatility increases. This is intuitive: lower risk tolerance and higher risk means that investors decrease their demand for the asset and hence equilibrium returns increase. Note that higher equilibrium returns essentially imply that investors \textit{require} higher expected returns in order to invest on the asset, a situation that is consistent to lower demand. On the other hand, higher exposure to market shocks increases investors' demand and hence decreases the required expected returns at the equilibrium.   

\smallskip

\section{\large Price Impact in a Market with no Transaction Costs}\label{sec:Price_Impact_NoFri} \normalfont

We are now ready to introduce the way we model investors' \textit{price impact}. For this, we first take the position of investor $n$ and assume (for now) that she acts strategically, while the rest of the investors are price takers. This creates a best-response function that eventually leads to the Nash equilibrium when all investors apply the same strategy. 

Consider again the market clearing condition \eqref{eq:equilibrium_condition} and assume that all the investors, besides $n$, act optimally through \eqref{eq:frless_comp_strategies}. In the spirit of \cite{A}, we observe that market's returns can be viewed as a function of investor $n$'s demand. Indeed, in this context the clearing condition is written as:
\begin{equation}\label{eq:best_response}
\phi_{n}(t)+\sum_{\substack{m=1\\m\neq n}}^{N}\Big(\de_{m}\Si^{-1}\nu(t)-\z_{m}(t)\Big)+\psi(t)=0, \ \ \ \forall \ t\in[0,T], \ \ \ \forall \ \phi_{n}\in\Hc.
\end{equation}
Solving for $\nu$, we get the \textit{price impact process} of investor $n$, $\nu_n(\phi)\in\Hc$:
\begin{equation}\label{eq:impact}
\nu_n(t;\phi)=\frac{\Si(\z_{-n}(t)-\psi(t)-\phi(t))}{\de_{-n}}, \ \ \ \forall \ t\in[0,T], \ \ \ \forall \ \phi\in\Hc,
\end{equation}
where $\z_{-n}:=\sum_{\substack{m=1\\m\neq n}}^{N}\z_{m}$ and $\de_{-n}:=\sum_{\substack{m=1\\m\neq n}}^{N}\delta_{n}$ denote respectively the aggregate exposure to market shocks and the aggregate risk tolerance of the rest of the investors. Under a slight abuse of notation we distinguish between the price-impact process of an investor $n$, $\nu_n(\cdot)$ and the local market's return process $\nu$, which is to be determined at the equilibrium. Functional $\nu_n(\phi)$ indicates that when investor $n$ submits positive demand for the asset, the equilibrium return decreases at a rate of $\Sigma/\delta_{-n}$. In turn, the way investor $n$ chooses the optimal strategy is by maximizing \eqref{eq:frless_objective} but after taking into account her impact on equilibrium returns through \eqref{eq:impact}. Hence, the adjusted objective functional for investor $n$, $\widetilde{\Fc}_{n}:\Hc\rightarrow\Rbb$, is given as:
\begin{equation}\label{eq:best_response_objective}
\widetilde{\Fc}_{n}(\phi):=\Ebb\bigg[\int_0^{T}e^{-r t}\Big((\phi(t))^{'}\nu_{n}(t;\phi)-\frac{1}{2\de_{n}}(\phi(t)+\z_{n}(t))^{'}\Si (\phi(t)+\z_{n}(t))\Big)dt\bigg].
\end{equation}
Recalling \eqref{eq:impact} and rearranging the terms, we get the following optimization problem:
\begin{equation}\label{eq:best_response_problem}
\sup\limits_{\phi\in\Hc}\Ebb\Bigg[\int_0^{T}e^{-r t}\bigg((\phi(t))^{'}\Si\Big(\frac{\z_{-n}(t)-\psi(t)}{\delta_{-n}}-\frac{\z_{n}(t)}{\de_{n}}\Big)-\frac{k_{n}}{2}(\phi(t))^{'}\Sigma\phi(t)-\frac{1}{2\de_{n}}(\z_{n}(t))^{'}\Si\z_{n}(t)\bigg)dt\Bigg], 
\end{equation}
where 
$$k_{n}:=\frac{1}{\de_{-n}}+\frac{1}{2\de_{n}}.$$ 
To simplify the formulas that follow, we introduce the notation for each investor's relative risk tolerance:
\begin{equation*}
\begin{aligned}
&\la_{m}:=\frac{\de_{m}}{\de} \ \ \ \text{and} \ \ \ \la_{-m}:=1-\la_{m}, \  \ \ \ m\in\II.
\end{aligned}
\end{equation*}
The following proposition deals with the solution of problem \eqref{eq:best_response_problem}.
\begin{proposition}\label{prop:best_response}
Under Assumption \ref{ass:standing}, the optimization problem \eqref{eq:best_response_problem} has the following unique solution:
\begin{equation}\label{eq:best_response_solution}
\Hc\ni\pnfp_{n}:=\frac{\la_{n}(\z_{-n}-\psi)-\la_{-n}\z_{n}}{\la_{n}+1}.
\end{equation}
\end{proposition}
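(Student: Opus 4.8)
The plan is to notice that, after the substitution already performed in \eqref{eq:best_response_problem}, the functional to be maximized is the integral against the finite measure $e^{-rt}\,dt\otimes d\Pbf$ of an integrand which, for a.e.\ $(t,\omega)$, is a strictly concave quadratic in the current value $\phi(t)\in\Rbb^{d}$. Writing the bracket in \eqref{eq:best_response_problem} as $g_{n}(t,\phi(t))$, where
\begin{equation*}
g_{n}(t,x):=x^{'}\Si\Big(\tfrac{\z_{-n}(t)-\psi(t)}{\de_{-n}}-\tfrac{\z_{n}(t)}{\de_{n}}\Big)-\tfrac{k_{n}}{2}\,x^{'}\Si x-\tfrac{1}{2\de_{n}}(\z_{n}(t))^{'}\Si\z_{n}(t),
\end{equation*}
the Hessian of $x\mapsto g_{n}(t,x)$ equals $-k_{n}\Si$, which is negative definite since $\Si$ is positive definite and $k_{n}>0$. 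Hence $g_{n}(t,\cdot)$ has a unique global maximizer, characterized by the first-order condition $\nabla_{x}g_{n}(t,x)=\Si\big(\tfrac{\z_{-n}(t)-\psi(t)}{\de_{-n}}-\tfrac{\z_{n}(t)}{\de_{n}}\big)-k_{n}\Si x=0$; as $\Si$ is invertible this yields the pointwise optimizer $x^{\ast}(t)=k_{n}^{-1}\big(\tfrac{\z_{-n}(t)-\psi(t)}{\de_{-n}}-\tfrac{\z_{n}(t)}{\de_{n}}\big)$.

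The second step is to rewrite $x^{\ast}$ in terms of the relative risk tolerances. Using $\de=\de_{n}+\de_{-n}$, $\de_{n}=\la_{n}\de$, $\de_{-n}=\la_{-n}\de$ and $\la_{n}+\la_{-n}=1$, one puts $\tfrac{\z_{-n}-\psi}{\de_{-n}}-\tfrac{\z_{n}}{\de_{n}}=\tfrac{\de_{n}(\z_{-n}-\psi)-\de_{-n}\z_{n}}{\de_{n}\de_{-n}}$ over a common denominator, multiplies by $k_{n}^{-1}$, and divides numerator and denominator by $\de$; after cancellation the expression collapses to $\tfrac{\la_{n}(\z_{-n}-\psi)-\la_{-n}\z_{n}}{\la_{n}+1}$, which is exactly $\pnfp_{n}$ of \eqref{eq:best_response_solution}. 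This is purely algebraic bookkeeping, the only point requiring attention being to keep the constant multiplying the quadratic term consistent with the definition of $k_{n}$ throughout.

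The last step upgrades the pointwise conclusion to the statement over $\Hc$. The candidate $\pnfp_{n}$ is an affine combination of $\z_{n}$, $\z_{-n}$ and $\psi$, all of which belong to $\Hc$ by Assumption \ref{ass:standing}, so $\pnfp_{n}\in\Hc$. For any $\phi\in\Hc$, the Cauchy--Schwarz inequality together with the boundedness of $\Si$ and Assumption \ref{ass:standing} shows that the three terms of the integrand in \eqref{eq:best_response_problem} are integrable, so $\widetilde{\Fc}_{n}(\phi)$ is well defined and finite; integrating the pointwise bound $g_{n}(t,\phi(t))\le g_{n}(t,\pnfp_{n}(t))$ against $e^{-rt}\,dt\otimes d\Pbf$ gives $\widetilde{\Fc}_{n}(\phi)\le\widetilde{\Fc}_{n}(\pnfp_{n})$, and the strict concavity of $g_{n}(t,\cdot)$ forces equality only when $\phi=\pnfp_{n}$ $dt\otimes d\Pbf$-a.e., which yields uniqueness. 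I do not expect a genuine obstacle: the argument is a routine pointwise quadratic optimization, and the only care needed is the integrability check legitimizing the reduction of the variational problem over $\Hc$ to the pointwise one, together with the algebraic reduction to the stated closed form.
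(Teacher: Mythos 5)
Your strategy---reduce the variational problem to a pointwise maximization of the strictly concave quadratic $g_{n}(t,\cdot)$, solve the first-order condition, and then check integrability and membership in $\Hc$---is sound and is in substance the same as the paper's calculus-of-variations argument via G\^{a}teaux differentials: since the functional has no coupling across $(t,\omega)$, both routes land on the same pointwise first-order condition, and your version is if anything slightly more elementary. The uniqueness and integrability steps at the end are fine.

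There is, however, a concrete problem in the one step you yourself flagged as ``the only point requiring attention'': the constant in front of the quadratic term. With $k_{n}=\tfrac{1}{\de_{-n}}+\tfrac{1}{2\de_{n}}=\tfrac{\de_{n}+\de}{2\de_{n}\de_{-n}}$ as defined, your pointwise optimizer is
\begin{equation*}
x^{\ast}=k_{n}^{-1}\Big(\tfrac{\z_{-n}-\psi}{\de_{-n}}-\tfrac{\z_{n}}{\de_{n}}\Big)
=\frac{2\de_{n}\de_{-n}}{\de_{n}+\de}\cdot\frac{\de_{n}(\z_{-n}-\psi)-\de_{-n}\z_{n}}{\de_{n}\de_{-n}}
=\frac{2\big(\la_{n}(\z_{-n}-\psi)-\la_{-n}\z_{n}\big)}{\la_{n}+1},
\end{equation*}
which is \emph{twice} the claimed $\pnfp_{n}$; the algebra does not ``collapse'' to \eqref{eq:best_response_solution} as you assert. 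The root cause is a normalization mismatch in the paper itself: expanding $\phi'\nu_{n}(\phi)-\tfrac{1}{2\de_{n}}(\phi+\z_{n})'\Si(\phi+\z_{n})$ from \eqref{eq:best_response_objective} produces the quadratic term $-\big(\tfrac{1}{\de_{-n}}+\tfrac{1}{2\de_{n}}\big)\phi'\Si\phi=-k_{n}\,\phi'\Si\phi$, not $-\tfrac{k_{n}}{2}\phi'\Si\phi$ as displayed in \eqref{eq:best_response_problem}. The correct first-order condition is therefore $2k_{n}\Si x=\Si\big(\tfrac{\z_{-n}-\psi}{\de_{-n}}-\tfrac{\z_{n}}{\de_{n}}\big)$, and since $2\de_{-n}k_{n}=(\de_{n}+\de)/\de_{n}$ this does yield exactly \eqref{eq:best_response_solution}. (The paper's own proof silently makes the same correction when it asserts $\de_{-n}k_{n}=(\de_{n}+\de)/\de_{n}$.) To close your argument you should either derive $g_{n}$ directly from \eqref{eq:best_response_objective}, where the coefficient is unambiguously $k_{n}$ rather than $k_{n}/2$, or explicitly note and repair the factor of two; as written, the claimed identity in your second step is false.
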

Henceforth, we also refer to \eqref{eq:best_response_solution} as the \textit{best-response strategy} of investor $n$ in the frictionless market. 
\begin{remark}
Comparing the best-response strategy of investor $n$ with the respective competitive optimal strategy \eqref{eq:frless_comp_strategies} at the competitive equilibrium we directly get:
\begin{equation*}
\pnfp_{n}(t)=\frac{1}{\la_{n}+1}\pnf_{n}(t), \ \ \ \forall \ t\in[0,T].
\end{equation*}
Hence, investor $n$ has motive to decrease the volume of her demand (positive or negative) when compared to her price-taking demand. Intuitively, lower demand implies a lower equilibrium price and hence higher required return for the assets that the investor is about to buy. 

On the other hand, it holds that
\begin{equation}\label{eq:best_response_return}
\Hc\ni\mnfp:=\frac{\la_{n}}{\la_{n}+1}\m_{-n}+\frac{1}{\la_{n}+1}\m, 
\end{equation}
where $\m_n$ is the equilibrium returns when only investor $n$ acts strategically, $\m_{-n}:=\Si(\z_{-n}-\psi)/\de_{-n}$ is the competitive equilibrium without the presence of the strategic investor $n$ and $\m$ is the competitive equilibrium when all investors are present (given in \eqref{eq:frless_comp_equilibrium}). This implies that the equilibrium returns are the average return of the competitive equilibrium $\m$ for all the market and the competitive equilibrium without investor $n$, meaning that the impact of investor $n$'s strategy drives the equilibrium return between $\m_{-n}$ and $\m$.
\end{remark}

\subsection{Investors' revealed risk exposure}\label{sub:revealed_exposure}

By applying strategy \eqref{eq:best_response_solution}, investor $n$ drives the equilibrium returns to \eqref{eq:best_response_return}, i.e.~to a different equilibrium. As mentioned in the introductory section, the main reason for investors' trading in this market is to hedge their risk exposure (the other reason is to offset noise traders' demand). When investor $n$ applies strategy \eqref{eq:best_response_solution}, she essentially reveals different hedging needs than her true ones. 
In order to see that, recall the price-taking demand \eqref{eq:frless_comp_strategies}, where the risk exposure is the intercept point of the affine demand function. Under price impact, investor $n$ submits a different demand function and hence the implied risk exposure is shifted. This creates the very useful and intuitive notion of the \textit{revealed risk exposure} that is formally defined below. 
\begin{definition}[Revealed risk exposure process]\label{def:revealed_exposure}
For any given demand process $\phi\in\Hc$ of investor $m$ and local returns $\nu\in\Hc$ pinned down at the equilibrium, the \textit{revealed exposure} of investor $m$ is defined as: 
\begin{equation}\label{eq:revealed_exposure}
\z_m(\phi):=\de_{m}\Si^{-1}\nu-\phi, \ \ \ \forall  m \in\II.
\end{equation}
\end{definition}


With a slight abuse of notation, we distinguish between the revealed risk exposure process $\z_{m}(\cdot)$ and the true exposure process, $\z_{m}$ of each investor $m$. Departing from the above definition, we derive the following result: 
\begin{proposition}\label{prop:best_response_revealed}
Impose Assumption \ref{ass:standing}. The revealed exposure from the equilibrium returns $\mnfp$ and the demand $\pnfp_{n}$ of investor $n$ is given as: 
\begin{equation}\label{eq:best_response_revealed}
\z_{n}(t;\pnfp_n)= \frac{\la_{n}^2}{1-\la_{n}^2}(\z_{-n}(t)-\psi(t))+\frac{1}{1+\la_{n}}\z_{n}(t),  \ \ \ \forall \ t\in[0,T].
\end{equation}
\end{proposition}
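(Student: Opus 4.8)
The plan is to compute directly from Definition \ref{def:revealed_exposure}. At the best-response profile only investor $n$ is strategic, her demand is $\pnfp_n$ of \eqref{eq:best_response_solution}, and (by the Remark following Proposition \ref{prop:best_response}, or equivalently by plugging $\pnfp_n$ into the price-impact map \eqref{eq:impact}) the resulting equilibrium returns are $\mnfp$ of \eqref{eq:best_response_return}. Hence
\[
\z_{n}(\pnfp_n)=\de_{n}\Si^{-1}\mnfp-\pnfp_n,
\]
and the whole statement reduces to substituting the closed forms \eqref{eq:best_response_return} and \eqref{eq:best_response_solution} and simplifying. Before doing so I would record the elementary identities that drive the algebra: $\de_{n}=\la_{n}\de$, $\de_{-n}=\de-\de_{n}=\la_{-n}\de$, $\la_{n}+\la_{-n}=1$ (so that $1-\la_{n}^{2}=(1+\la_{n})\la_{-n}$), and $\z=\z_{n}+\z_{-n}$ from \eqref{eq:aggregate}, i.e.\ $\z-\psi=\z_{n}+(\z_{-n}-\psi)$. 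It is convenient to note that $\Si^{-1}\m=(\z-\psi)/\de$ and $\Si^{-1}\m_{-n}=(\z_{-n}-\psi)/\de_{-n}$, so $\Si$ disappears from the computation entirely.

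Next I would expand $\de_{n}\Si^{-1}\mnfp$ using \eqref{eq:best_response_return}: it equals $\tfrac{\la_{n}}{\la_{n}+1}\tfrac{\de_{n}}{\de_{-n}}(\z_{-n}-\psi)+\tfrac{1}{\la_{n}+1}\tfrac{\de_{n}}{\de}(\z-\psi)$; replacing $\de_{n}/\de_{-n}=\la_{n}/\la_{-n}$, $\de_{n}/\de=\la_{n}$ and $\z-\psi=\z_{n}+(\z_{-n}-\psi)$, and collecting the $(\z_{-n}-\psi)$-terms via $\la_{n}^{2}+\la_{n}\la_{-n}=\la_{n}$, yields
\[
\de_{n}\Si^{-1}\mnfp=\frac{\la_{n}}{(\la_{n}+1)\la_{-n}}(\z_{-n}-\psi)+\frac{\la_{n}}{\la_{n}+1}\z_{n}.
\]
Then I subtract $\pnfp_n=\tfrac{1}{\la_{n}+1}\big(\la_{n}(\z_{-n}-\psi)-\la_{-n}\z_{n}\big)$: the coefficient of $(\z_{-n}-\psi)$ becomes $\tfrac{\la_{n}}{(\la_{n}+1)\la_{-n}}-\tfrac{\la_{n}}{\la_{n}+1}=\tfrac{\la_{n}(1-\la_{-n})}{(\la_{n}+1)\la_{-n}}=\tfrac{\la_{n}^{2}}{(\la_{n}+1)\la_{-n}}=\tfrac{\la_{n}^{2}}{1-\la_{n}^{2}}$, and the coefficient of $\z_{n}$ becomes $\tfrac{\la_{n}}{\la_{n}+1}+\tfrac{\la_{-n}}{\la_{n}+1}=\tfrac{1}{\la_{n}+1}$, which is exactly \eqref{eq:best_response_revealed}. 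Membership $\z_{n}(\pnfp_n)\in\Hc$ is then immediate from Assumption \ref{ass:standing}, since the right-hand side is a fixed (deterministic-coefficient) linear combination of $\z_{n},\z_{-n},\psi\in\Hc$.

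There is no real analytic obstacle here — the only thing to be careful about is not conflating the two normalizations, the full aggregates $(\de,\z)$ against the all-but-$n$ aggregates $(\de_{-n},\z_{-n})$, since $\m$ and $\m_{-n}$ carry different denominators; once the identities $\de_{-n}=\la_{-n}\de$ and $1-\la_{n}^{2}=(1+\la_{n})\la_{-n}$ are in place the simplification is mechanical.
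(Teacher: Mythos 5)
Your proposal is correct and follows essentially the same route as the paper's proof: both evaluate the revealed-exposure definition $\z_{n}(\pnfp_n)=\de_{n}\Si^{-1}\mnfp-\pnfp_{n}$, substitute the closed forms \eqref{eq:best_response_return} and \eqref{eq:best_response_solution}, and simplify (the paper carries the algebra in $\de$- and $k_n$-notation before converting to $\la$'s, while you work with the relative risk tolerances throughout, which is only a cosmetic difference). The intermediate coefficients you compute check out, so nothing is missing.
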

\begin{remark}
The coefficient of the second term in \eqref{eq:best_response_revealed} is positive and less than one. Whence, investor $n$'s revealed risk exposure is a fraction of her true one and a part of the net aggregate exposure $\zeta_{-n}-\psi$. The latter can be called residual exposure, since it stands for the aggregate exposure to market of the rest of the investors (after the offsetting of the noise traders' demand). In other words, strategic investor $n$ on the one hand lessens the true exposure of her endowment and on the other supplements it by a fraction $\propto (\zeta_{-n}-\psi)$. We could think of the above strategy as investor $n$ declaring only a part of her true exposure to the rest of the investors, while exploiting her impact on the equilibrium by submitting exposure similar to her counterparties.
\end{remark}

A reasonable question that arises is when does the strategic demand of an investor coincides with the competitive one. As stated in the following corollary, this coincidence never happens, as long as investor's demand at competitive equilibrium is non-zero.
\begin{corollary}\label{cor:fr_strategies_equality}
Under Assumption \ref{ass:standing}, it holds that: 
\begin{equation*}
\pnf_{n}(t)=\pnfp_{n}(t),\quad \forall \ t\in[0,T],\qquad\text{ if and only if }\qquad \pnf_{n}(t)=0,  \quad \forall \ t\in[0,T].
\end{equation*}
\end{corollary}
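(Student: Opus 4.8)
The plan is to reduce everything to the proportionality identity already recorded in the remark following Proposition \ref{prop:best_response}, namely that the best-response strategy is a scalar multiple of the competitive optimal portfolio:
\begin{equation*}
\pnfp_{n}(t)=\frac{1}{\la_{n}+1}\,\pnf_{n}(t), \qquad \forall\, t\in[0,T].
\end{equation*}
I would first invoke this identity (it is an immediate consequence of comparing \eqref{eq:best_response_solution} with \eqref{eq:frless_comp_strategies} at the competitive equilibrium, so nothing new needs to be established), and then argue both implications separately.

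For the ``only if'' direction, I would suppose $\pnf_{n}(t)=\pnfp_{n}(t)$ for all $t\in[0,T]$ and substitute the identity to obtain $\pnf_{n}(t)=\frac{1}{\la_{n}+1}\pnf_{n}(t)$, hence $\big(1-\frac{1}{\la_{n}+1}\big)\pnf_{n}(t)=\frac{\la_{n}}{\la_{n}+1}\pnf_{n}(t)=0$ for all $t$. Since the risk tolerances are strictly positive, $\la_{n}=\de_{n}/\de>0$, so the scalar $\la_{n}/(\la_{n}+1)$ is nonzero and we may divide it out to conclude $\pnf_{n}(t)=0$ for all $t\in[0,T]$. The ``if'' direction is even shorter: if $\pnf_{n}(t)=0$ for all $t$, the identity gives $\pnfp_{n}(t)=\frac{1}{\la_{n}+1}\cdot 0=0=\pnf_{n}(t)$.

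There is essentially no obstacle here; the only point worth flagging is that the argument hinges on $\la_{n}\ne 0$, which is guaranteed by the standing requirement $\de_{m}>0$ for all $m\in\II$, and on interpreting both ``if and only if'' statements in the pathwise (a.e.\ $t$, a.s.) sense consistent with elements of $\Hc$. I would state the proof in one short paragraph, noting that Assumption \ref{ass:standing} is only used to ensure the objects involved lie in $\Hc$ so that the earlier proposition applies.
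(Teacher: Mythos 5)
Your proof is correct, and it takes a genuinely more direct route than the paper's. The paper proves the corollary by working through the revealed-exposure representation \eqref{eq:best_response_revealed}: it rewrites $\pnfp_{n}$ as $\de_{n}\Si^{-1}\mnfp-\z_{n}(\cdot;\pnfp)$, compares this with $\pnf_{n}=\de_{n}\Si^{-1}\m-\z_{n}$, and verifies each implication by an explicit computation involving the two return processes $\m$ and $\mnfp$ and the identity \eqref{eq:3.13}. You instead invoke the scalar proportionality $\pnfp_{n}=\tfrac{1}{\la_{n}+1}\pnf_{n}$ already recorded in the remark after Proposition \ref{prop:best_response} (itself an immediate consequence of comparing \eqref{eq:best_response_solution} with \eqref{eq:frless_comp_strategies} evaluated at $\m$), and reduce the equivalence to the observation that the factor $\la_{n}/(\la_{n}+1)$ is a nonzero scalar, which can be divided out. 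This is shorter, isolates the essential mechanism (the best response is a nontrivial scalar contraction of the competitive demand), and avoids re-deriving anything; the paper's computation buys nothing extra here beyond consistency with the revealed-exposure framework it has just introduced. Your two flagged caveats --- that $\la_{n}>0$ because $\de_{m}>0$ for all $m$, and that equalities of elements of $\Hc$ are read in the $dt\otimes d\Pbf$ sense --- are exactly the right ones, and your reading of $\pnf_{n}$ as the competitive demand evaluated at the competitive equilibrium $\m$ matches the paper's.
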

Note that it is rather standard to have $\pnf_{n}(t)\neq 0$. Indeed, from \eqref{eq:frless_comp_strategies} and \eqref{eq:frless_comp_equilibrium}, we get that condition $\pnf_{n}=0$ holds if and only if $\lambda_n(\zeta_{-n}-\psi)=(1-\lambda_n)\zeta_n$. Even if the investors' endowment have a specific linear dependence, the presence of noise traders makes the condition extremely rare.

\subsection{The Nash equilibrium returns in a market with no transaction costs}\label{sub:2}

We now consider a market where all the investors act strategically by applying the strategy that is determined by problem \eqref{eq:best_response_objective}. In the view of \eqref{eq:revealed_exposure}, the Nash equilibrium is induced as the fixed point of \eqref{eq:best_response_revealed}. We formally define the the Nash equilibrium returns in this context below:
\begin{definition}\label{def:nash_frless}
The processes $(\pnfn_{m}\in\Hc: m\in\II)$ are called investors' \textit{Nash optimal demands} if for each $m\in\II$, $\pnfn_{m}$ corresponds through \eqref{eq:revealed_exposure} to the revealed risk exposures of  \eqref{eq:best_response_revealed}.
In turn, the return process induced by the market-clearing condition when all the investors act according to their Nash optimal demands, is called \textit{Nash equilibrium returns}. 
\end{definition}


In the view of \eqref{eq:best_response_solution}, \eqref{eq:revealed_exposure} and the investors' symmetry, their optimal demands $(\pnfn_{m}\in\Hc: m\in\II)$ solve the following linear system:
\begin{equation}\label{eq:Nash_system_frless}
\pnfn_{m}=\frac{\la_{m}(\z_{-m}(\pnfn)-\psi)-\la_{-m}\z_{m}}{\la_{m}+1}, \quad m\in\II.
\end{equation}
The fact that solutions to the above belong in $\Hc$ is guaranteed by Assumption \ref{ass:standing} and system's linearity. In other words, Nash equilibrium demands have the form of the best-response strategy \eqref{eq:best_response_solution}, but with the rest of the investors being strategic too. Armed with the above, we give the following characterization:
\begin{theorem}\label{the:Nash_frless}
Under Assumption \ref{ass:standing}, the unique Nash equilibrium returns in a market without transaction costs take the following form:
\begin{equation}\label{eq:Nash_frless}
\Hc\ni\mnfn:=\frac{\Si}{\de}\frac{(\z-\psi)-\sum_{m=1}^{N}\la_{m}\z_{m}}{1-\sum_{m=1}^{N}\la_{m}^2}.
\end{equation}
\end{theorem}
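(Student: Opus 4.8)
The plan is to solve the linear system \eqref{eq:Nash_system_frless} directly by substitution. First I would expand the revealed exposures using their definition \eqref{eq:revealed_exposure}: since $\z_m(\pnfn) = \de_m \Si^{-1}\mnfn - \pnfn_m$ for the yet-to-be-determined Nash return $\mnfn$, and $\z_{-m}(\pnfn) = \sum_{k \neq m}\z_k(\pnfn) = (\de - \de_m)\Si^{-1}\mnfn - \sum_{k\neq m}\pnfn_k = \de_{-m}\Si^{-1}\mnfn - (\Phi - \pnfn_m)$ where $\Phi := \sum_{k=1}^N \pnfn_k$, I substitute this into \eqref{eq:Nash_system_frless}. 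This turns each equation into a relation between $\pnfn_m$, $\Phi$, $\mnfn$, $\z_m$, and $\psi$ only.

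Next I would impose the market-clearing condition. The Nash equilibrium return is, by Definition \ref{def:nash_frless}, the one induced by clearing when everyone plays their Nash demand, i.e. $\Phi + \psi = 0$, so $\Phi = -\psi$. Plugging $\Phi = -\psi$ back into the per-investor equation lets me solve each $\pnfn_m$ explicitly as an affine function of $\mnfn$, $\z_m$ and $\psi$ (with coefficients involving $\la_m$). Then I sum these $N$ equations over $m \in \II$ and use $\sum_m \pnfn_m = -\psi$ once more: this produces a single equation in which the only remaining unknown is $\mnfn$. Solving that scalar-coefficient (matrix) equation — the coefficient of $\Si^{-1}\mnfn$ will be a number of the form $\sum_m \de_m/(\la_m+1)$-type expression that I expect to simplify, after multiplying through by $\de$ and collecting, to something proportional to $1 - \sum_m \la_m^2$ — yields exactly \eqref{eq:Nash_frless}. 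Throughout, $\Si$ is invertible by assumption so these manipulations are legitimate, and membership $\mnfn \in \Hc$ follows from Assumption \ref{ass:standing} together with the fact that $\mnfn$ is a finite linear combination (with constant matrix coefficients) of $\z_1,\dots,\z_N,\psi \in \Hc$.

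For uniqueness I would argue that the system \eqref{eq:Nash_system_frless} together with clearing is linear and the derivation above is a chain of equivalences: any Nash equilibrium return must satisfy the final scalar equation, whose coefficient $1 - \sum_m \la_m^2$ is strictly positive (since $\la_m > 0$, $\sum_m \la_m = 1$, hence $\sum_m \la_m^2 < (\sum_m \la_m)^2 = 1$ — actually $\sum \la_m^2 \le \max_m \la_m \cdot \sum \la_m = \max_m \la_m < 1$ when $N \ge 2$, and the $N=1$ degenerate case is excluded or handled separately), so the equation has a unique solution. It then only remains to verify that this $\mnfn$, fed back through \eqref{eq:best_response_revealed}/\eqref{eq:revealed_exposure}, does reproduce consistent Nash demands and does clear the market — a direct back-substitution check.

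The main obstacle I anticipate is purely the algebraic bookkeeping: keeping track of which sums run over all of $\II$ versus over $m \neq n$, correctly handling the heterogeneous $\la_m$ (so the coefficients do not factor out of the sum as cleanly as in the common-risk-tolerance case), and verifying that the aggregate coefficient collapses to $1 - \sum_m \la_m^2$. A convenient device to control this is to introduce $\Phi = \sum_m \pnfn_m$ as an auxiliary unknown early on and eliminate it via clearing before summing, rather than summing first; this avoids a double-counting tangle in the $\z_{-m}$ terms. No delicate analytic estimates are needed — positivity of $\Si$ and Assumption \ref{ass:standing} are the only structural inputs.
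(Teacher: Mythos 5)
Your proposal is correct and follows essentially the same route as the paper: both solve the linear Nash system by combining the revealed-exposure relation \eqref{eq:revealed_exposure} with the market-clearing condition, aggregating over investors so that the coefficient of $\Si^{-1}\mnfn$ collapses to $\de\bigl(1-\sum_{m}\la_{m}^{2}\bigr)$ (your identity $\sum_{m}\la_{m}\de_{-m}=\de\bigl(1-\sum_{m}\la_{m}^{2}\bigr)$ is exactly the simplification the paper performs via the aggregate revealed exposure). The only cosmetic differences are that you eliminate the aggregate demand $\Phi$ via clearing before summing whereas the paper sums the revealed exposures first, and that you justify uniqueness by positivity of $1-\sum_{m}\la_{m}^{2}$ plus back-substitution while the paper invokes diagonal dominance of the system's coefficient matrix; both are valid.
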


As Corollary \ref{cor:fr_strategies_equality} implies, Nash equilibrium is generally different than the competitive one. Indeed, using \eqref{eq:frless_comp_equilibrium} and \eqref{eq:Nash_frless} we readily get that:
\begin{equation}\label{eq:Nash_competitive_frless}
\mnfn(t)-\m(t)=\frac{\Si}{\de}\frac{1}{1-\sum_{m\in\II}\la_{m}^2}\sum_{m\in\II}\la_{m}\pnf_{m}(t), \quad \forall \ t\in[0,T],
\end{equation}
where we recall that $\pnf_{n}$ is the investor $n$'s optimal demand at the competitive equilibrium, given in \eqref{eq:frless_comp_strategies}. The difference of equilibrium returns with and without price impact given in \eqref{eq:Nash_competitive_frless} also called \textit{liquidity premium} created by the presence of price impact. Similarly to the condition $\pnf_n=0$, the weighted sum $\sum_{m\in\II}\la_{m}\pnf_{m}$ rarely equals to zero. For example, when investors have the same risk tolerance, i.e.~$\de_m=\bde$, $\forall m\in\II$,  
the liquidity premium equals to $-(\Si/\bde N(N-1))\times \psi$, which vanishes if and only if there is no noise demand in the market. In other words, there is no liquidity premium due to price impact only when investors' risk tolerances are equal and there is no noise trader. 

On the other hand, even if $\psi=0$, then $\mnfn=\m$ holds if and only if $\zeta=\sum_{m\in\II}\lambda_m\zeta_m/\sum_{m\in\II}\lambda_m^2$, which is a quite rare condition. 

The effect of price impact on the equilibrium returns has also a monotonicity property. In particular, we get from \eqref{eq:Nash_competitive_frless} that when $\de_m=\bde$, $\forall m\in\II$: $(\mnfn-\m)'\psi\leq 0$, which means that price impact creates liquidity premium on the assets' required return in the opposite direction to noise traders demand. This is apparent when $d=1$: A positive (resp.~negative) noise demand decreases (resp.~increases) the equilibrium required asset return, or equivalent increases (resp.~decreases) its price. This implies that the noise traders will buy (resp.~sell) the asset at a higher (resp.~lower) price (an effect that benefits the investors in the aggregated level).

Another issue that stems from equilibria comparison is whether the utility gain (or utility surplus) that is created by the trading is higher or lower for each investor due to price impact. In general, competitive equilibrium is consistent with Pareto optimal allocation and hence another equilibrium allocation decreases the aggregate gain of utility. However, in individual level, Nash equilibrium may be beneficial for some investor(s). This feature appears in several recent non-competitive settings such as \cite{AnthKard24} and \cite{MalRos17}. 

To this end, we define the utility surplus for each asset return $\nu\in\Hc$ and each admissible strategy $\phi\in\Hc$ of the investor $m$ as:
\begin{equation}\label{utilsurp}
\mathbb{US}_{m}^{\nu}(\phi):=\mathcal{F}_{m}(\phi)-\mathcal{F}_{m}(0), \ \ \ \forall\ \  m\in\II.
\end{equation}
In the view of the related literature we get the following limiting result:
\begin{proposition}\label{prop:surplus_comparison}
Impose Assumption \ref{ass:standing} and fix $(\de_2,...,\de_N)\in(0,\infty)^{N-1}$. Then, we get:
\begin{equation*}
\lim_{\de_1\rightarrow\infty}\mathbb{US}_{1}^{\m}(\pnf_1) = 0
\end{equation*}
and on the other hand:
\begin{equation*}
\lim_{\de_1\rightarrow\infty}\mathbb{US}_{1}^{\mnfn}(\pnfn_1)=\Ebb\bigg[\int_0^{T}e^{-rt}\frac{1}{4\de_{-1}}(\z_{-1}(t)-\psi(t))'\Si(\z_{-1}(t)-\psi(t))dt\bigg]\geq 0.
\end{equation*}
\end{proposition}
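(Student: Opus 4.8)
The plan is to compute both limits by using the explicit closed-form expressions available for the objects involved. Recall from \eqref{eq:frless_objective} that
$\mathcal{F}_m(\phi)=\Ebb\big[\int_0^T e^{-rt}((\phi)'\nu-\tfrac{1}{2\de_m}(\phi+\z_m)'\Si(\phi+\z_m))dt\big]$, so $\mathcal{F}_m(0)=-\tfrac{1}{2\de_m}\Ebb[\int_0^T e^{-rt}\z_m'\Si\z_m\,dt]$, and hence
$\mathbb{US}_m^\nu(\phi)=\Ebb\big[\int_0^T e^{-rt}((\phi)'\nu-\tfrac{1}{2\de_m}((\phi+\z_m)'\Si(\phi+\z_m)-\z_m'\Si\z_m))dt\big]$.
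First I would treat the competitive case. Substituting $\phi=\pnf_1=\de_1\Si^{-1}\m-\z_1$ from \eqref{eq:frless_comp_strategies}, the bracket simplifies (this is the standard value-of-the-optimizer computation for a quadratic functional) to something of order $\de_1\,\m'\Si^{-1}\m$ plus lower-order terms. The key point is that as $\de_1\to\infty$ with $\de_2,\dots,\de_N$ fixed, $\de=\de_1+\de_{-1}\to\infty$ and, by \eqref{eq:frless_comp_equilibrium}, $\m=\Si(\z-\psi)/\de\to 0$ at rate $1/\de_1$; so $\de_1\m'\Si^{-1}\m\sim \de_1\cdot(1/\de_1^2)\cdot(\z-\psi)'\Si(\z-\psi)\to 0$, and all remaining terms vanish as well. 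Dominated convergence (justified by Assumption \ref{ass:standing}, which puts $\z_m,\psi\in\Hc$) delivers $\lim_{\de_1\to\infty}\mathbb{US}_1^{\m}(\pnf_1)=0$.

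For the Nash case the structure is the same but the bookkeeping is more delicate because now I must track $\mnfn$ and $\pnfn_1$, not $\m$ and $\pnf_1$. I would first record the limiting behaviour of the relevant quantities as $\de_1\to\infty$: $\la_1=\de_1/\de\to 1$, $\la_m=\de_m/\de\to 0$ for $m\ge 2$, and $1-\sum_m\la_m^2\to 0$. From \eqref{eq:Nash_frless}, $\mnfn=\tfrac{\Si}{\de}\cdot\tfrac{(\z-\psi)-\sum_m\la_m\z_m}{1-\sum_m\la_m^2}$; since the numerator tends to $(\z-\psi)-\z_1$ while $\de(1-\sum_m\la_m^2)=\de-\sum_m\de_m^2/\de$ behaves like $\de_1(1-\la_1^2)+O(1)\sim 2\de_{-1}$ as $\de_1\to\infty$ (using $1-\la_1^2=(1-\la_1)(1+\la_1)\sim 2\la_{-1}=2\de_{-1}/\de$), we get $\mnfn\to \tfrac{\Si(\z-\psi-\z_1)}{2\de_{-1}}=\tfrac{\Si(\z_{-1}-\psi)}{2\de_{-1}}$. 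Next I need the limit of $\pnfn_1$; rather than solving the full system \eqref{eq:Nash_system_frless}, I would use the revealed-exposure identity $\pnfn_1=\de_1\Si^{-1}\mnfn-\z_1(\pnfn_1)$ together with \eqref{eq:best_response_revealed} — or, more directly, observe that in the limit the first-order condition for problem \eqref{eq:best_response_problem} with $k_n=\tfrac{1}{\de_{-n}}+\tfrac{1}{2\de_n}\to\tfrac{1}{\de_{-1}}$ forces $\pnfn_1\to \de_{-1}\Si^{-1}\Si\big(\tfrac{\z_{-1}-\psi}{\de_{-1}}-0\big)\cdot\tfrac{1}{1}$; I should verify that $\de_1\Si^{-1}\mnfn$ stays bounded and contributes the right amount so that $\pnfn_1\to \z_{-1}-\psi$ plus the correct finite-$\z_1$ piece. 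Plugging these limits into the formula for $\mathbb{US}_1^{\mnfn}(\pnfn_1)$, again after the algebraic simplification of the quadratic bracket and using dominated convergence, should collapse everything to the single surviving term $\tfrac{1}{4\de_{-1}}(\z_{-1}-\psi)'\Si(\z_{-1}-\psi)$, integrated against $e^{-rt}$; nonnegativity is then immediate from positive definiteness of $\Si$.

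The main obstacle I anticipate is the precise asymptotic bookkeeping in the Nash case: I must identify not only the leading behaviour of $\mnfn$ and $\pnfn_1$ but also confirm that the cross terms $(\pnfn_1)'\mnfn$, $\tfrac{1}{2\de_1}(\pnfn_1)'\Si\pnfn_1$, and $\tfrac{1}{\de_1}(\pnfn_1)'\Si\z_1$ all either vanish or combine into the claimed limit — the $\tfrac{1}{2\de_1}(\pnfn_1)'\Si\pnfn_1$ term in particular tends to $0$ only because $\pnfn_1$ stays bounded while $\de_1\to\infty$, and one must be careful that no $\de_1$-growing contribution hides in $(\pnfn_1)'\mnfn$ via the $\de_1\Si^{-1}$ hidden inside $\pnfn_1$'s defining relation. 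A clean way to avoid a fragile expansion is to solve the best-response problem \eqref{eq:best_response_problem} pointwise in $t$ for investor $1$ with the \emph{other} investors' equilibrium demands frozen — which, at the Nash point, is exactly what $\pnfn_1$ does — and read off $\mathbb{US}_1^{\mnfn}(\pnfn_1)=\mathcal{F}_1(\pnfn_1)-\mathcal{F}_1(0)=\widetilde{\Fc}_1(\pnfn_1)-\widetilde{\Fc}_1(0)+(\text{price-impact correction})$, then take the limit of the explicit optimizer value; this keeps every dependence on $\de_1$ visible. Throughout, the integrability needed to pass limits under the integral is supplied by Assumption \ref{ass:standing} and the fact that all the $\de_1$-dependent coefficients are bounded uniformly in $\de_1\ge\de_1^0$ for any fixed threshold.
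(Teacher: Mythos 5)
Your overall strategy is the same as the paper's: identify the pointwise limits of the equilibrium return and of investor $1$'s optimal demand as $\de_1\to\infty$, substitute into the quadratic surplus, and pass to the limit under the integral by dominated convergence. The competitive half of your argument is correct ($\m\to 0$ at rate $1/\de_1$ while $\pnf_1$ stays bounded, so every term vanishes), and your computation $\mnfn\to\Si(\z_{-1}-\psi)/(2\de_{-1})$ is exactly the paper's.

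However, the Nash half contains a genuine error. You assert that $\pnfn_1\to\z_{-1}-\psi$ (plus a finite $\z_1$ piece), obtained by passing to the limit in the first-order condition of the best-response problem \eqref{eq:best_response_problem} against \emph{price-taking} opponents. The correct limit is $\tfrac12(\z_{-1}-\psi)$. Two things go wrong in your route. First, the Nash demand solves the fixed-point system \eqref{eq:Nash_system_frless}, in which $\z_{-1}$ is replaced by the \emph{revealed} exposure $\z_{-1}(\pnfn)$ of the other strategic investors; the clean way to handle this (which is what the paper does) is to decouple the system via the clearing condition $\sum_m\pnfn_m+\psi=0$, which gives $\z_{-1}(\pnfn)-\psi=\de_{-1}\Si^{-1}\mnfn+\pnfn_1$ and hence the closed form $\pnfn_1=\la_1\de_{-1}\Si^{-1}\mnfn-\la_{-1}\z_1\to\de_{-1}\Si^{-1}\cdot\tfrac{\Si(\z_{-1}-\psi)}{2\de_{-1}}=\tfrac12(\z_{-1}-\psi)$. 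Second, your fallback check that ``$\de_1\Si^{-1}\mnfn$ stays bounded'' fails: since $\mnfn$ has a nonzero limit, $\de_1\Si^{-1}\mnfn$ diverges, and in the identity $\pnfn_1=\de_1\Si^{-1}\mnfn-\z_1(\pnfn_1)$ both terms blow up with only their difference bounded. With your limit $\z_{-1}-\psi$ the leading term $(\pnfn_1)'\mnfn$ would converge to $\tfrac{1}{2\de_{-1}}(\z_{-1}-\psi)'\Si(\z_{-1}-\psi)$, twice the claimed value, so the proof as sketched does not establish the proposition; with the corrected limit $\tfrac12(\z_{-1}-\psi)$ it yields $\tfrac{1}{4\de_{-1}}(\z_{-1}-\psi)'\Si(\z_{-1}-\psi)$ as required, and the remaining terms vanish because $\pnfn_1$ stays bounded while $1/\de_1\to 0$. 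The dominated-convergence step is then justified exactly as you indicate, via bounds on $\|\mnfn\|_2$ and $\|\pnfn_1\|_2$ uniform in large $\de_1$.
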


Note that $\delta_1=\infty$ refers to risk-neutral investor (vanishing risk aversion). Due to positive definiteness of $\Si$, $\lim_{\de_1\rightarrow\infty}\mathbb{US}_{1}^{\mnfn}(\pnfn)$ is normally strictly positive and becomes zero if and only if the noise traders' demand exactly matches (in a $dt\otimes d\Pbf$ sense) the hedging demands of the rest of the investors. Therefore, the risk-neutral investor benefits from the price impact as long as noise traders demand is different than the rest of the investors' risk exposure. In fact, her utility gain  essentially stems from the aggregate difference $\z_{-1}(t)-\psi(t)$.

\smallskip

\section{\large Price Impact in a Market with Transaction Costs}\label{sec:Frictions} \normalfont

In this section we aim to generalize the concept of the Nash equilibrium under the presence of exogenous market transaction costs. As in the frictionless case, we build our price-impact equilibrium upon the corresponding competitive one. For this, we begin our analysis with a brief introduction on the competitive equilibrium when each investor faces transaction costs upon trading. 

\subsection{Competitive market with transaction costs}

 As mentioned in the introductory section, we consider market's transaction costs and more precisely in the form of quadratic costs on the rate of trading. Hence, the frictional (i.e.~with transaction costs) analogue of the competitive functional, denoted by $\Fc_{\La,m}$, is defined as:
\begin{equation}\label{eq:fr_objective}
\Fc_{\La,m}(\phi):=\Ebb\bigg[\int_0^{T}e^{-r t}\Big((\phi(t))^{'}\nu(t)-\frac{1}{2\de_{m}}(\phi(t)+\zeta_{m}(t))^{'}\Si (\phi(t)+\zeta_{m}(t))-(\dot{\phi}(t))^{'}\La\dot{\phi}(t)\Big)dt\bigg], 
\end{equation}
for each $m\in\II$, where $\dot{\phi}_{m}(t):=d\phi_{m}(t)/dt$ stands for investor's trading rate  and $\La$ is an $\Rbb^{d\times d}$-valued constant diagonal and positive definite matrix. In words, investors face the same costs that are proportional to the square of their trading rate. Furthermore, since the frictional competitive functional involves the investor's trading rate, a natural assumption to avoid infinite transaction costs is to impose $\dot{\phi}_{m}\in\Hc$, for each $m\in\II$. 
%

Focusing on demand functions of the form $\phi_{m}(t)=\int_0^{t}\dot{\phi}_{m}(s)ds, \ m\in\II$, we can state our maximization objective in optimal control terms. More precisely, for each $m\in\II$, $\phi_{m}$ can be viewed as a state variable which is gradually evolved by adjusting to the control $\dot{\phi}_{m}$. Hence, the objective of investor $m$ for a competitive market with transaction costs could be written as
\begin{equation}\label{eq:fr_problem}
\sup\limits_{\dot{\phi}_{m}\in\Hc}\Fc_{\La,m}(\dot{\phi}).
\end{equation}
From \cite[Lemma~4.1]{C}, we get that \eqref{eq:fr_problem} has a unique solution which is characterized by the following coupled and linear Forward-Backward SDE (FBSDE) for each $m\in\II$:
\begin{equation}\label{eq:fr_solution_compe}
\begin{aligned}
&d\phi_{m}(t)=\dot{\phi}_{m}(t)dt, \ \ \ \phi_{m}(0)=0, \\
&d\dot{\phi}_{m}(t)=dM_{m}(t)+\frac{\La^{-1}\Si}{2\de_{m}}(\phi_{m}(t)-\pnf_{m}(t))dt+r\dot{\phi}_{m}(t)dt, \ \ \ \dot{\phi}_{m}(T)=0,
\end{aligned}
\end{equation}
where $M_{m}\equiv(M_{i,m}: i\in\DD)$ is a $\Rbb^{d}$-valued, square integrable (continuous) $\Fbb$-martingale determined as part of the solution, and $\pnf_{n}$ is the frictionless competitive optimal given by  \eqref{eq:frless_comp_strategies}.

\subsection{The price impact of a single investor in a market with transaction costs}

We return to the concept of price impact imposing now transaction costs to the market. As in the frictionless equilibrium, we first consider the simplified case, where only one investor (say investor $n$) acts strategically, while the rest stay price-takers. Although there is a characterization of the unique equilibrium in the competitive setting, there is no general explicit formula for the equilibrium returns. This comes in sharp contrast to the frictionless case and makes the determination of the impact of investor $n$ to the equilibrium return (and hence her best-response) quite challenging. However, as we shall show below, the following assumption allows us to go beyond a characterization of the market's equilibirum returns and deal both with the best-response and the Nash equilibrium under transaction costs.
\begin{assumption}\label{ass:equal_tolerance}
Investors have the same risk tolerances: $\de_m=\bde$, for each $m\in\II$.
\end{assumption}

Moreover, for the sake of tractability we consider the following class:
\begin{flalign*}
&\Hci:=\Big\{X: \text{$X:[0,T]\times\Omega\rightarrow \mathbb{R}^{d}$ optional s.t. } \|X\|_{\Hci}:=\esssup_{\omega\in\Omega}\big\{\esssup_{t\in[0,T]}\|X(t)\|_{2}\big\}<\infty\Big\}. 
\end{flalign*}
Now, as it can be seen by \cite[Theorem A.4]{C}, \eqref{eq:fr_solution_compe} admits a (pathwise) unique solution. More precisely, the BSDE part of the aforementioned solution, assuming $\z_{m},\nu\in\Hci$ for all $1\leq m \leq N$, belongs to the following class of continuous semimartingales which forms a subspace of $\Hci$: 
\begin{equation*}
\Ac:=\Big\{X:dX(t)=a^{X}(t)dt+dM^{X}(t), \ \text{s.t. $X(T)=0$, $a^{X}\in\Hci$, $M^{X}$ is an $\Fbb$-MG in $\Hci$}\Big\}.
\end{equation*}
In turn, we define the set:
\begin{equation*}
\Wc_{\Ac}:=\Big\{X:\, dX(t)=\dot{X}(t)dt, \ t\in[0,T], \ X(0)=0 \text{ and }\dot{X}\in\Ac\Big\},
\end{equation*}
which forms the new admissible class of investor $n$'s demand in a market with both transaction costs and price impact. 

We hence update Assumption \ref{ass:standing} to:
\begin{assumption}\label{ass:standing2}
$\z_{m},\nu\in\Hci$ for all $m\in\II$ and $\phi_{n},\psi\in\Wc_{\Ac}$.
\end{assumption}

When considering the frictional analogue of the price-impact process for investor $n$  introduced in \eqref{eq:impact}, $\Ac$ becomes a natural choice for her control. In other words, by observing how the rest of the investors act optimally as price takers at the equilibrium, we have the following result:


\begin{lemma}\label{lem:fr_objective_impact}
Impose Assumptions \ref{ass:equal_tolerance} and \ref{ass:standing2}. In a market with transaction costs, when only investor $n$ acts strategically the objective functional \eqref{eq:fr_objective} becomes $\widetilde{\Fc}_{\La,n}:\Ac\rightarrow\Rbb$:
\begin{flalign}\label{eq:fr_objective_impact}
\widetilde{\Fc}_{\La,n}(\dot{\phi})&:=\Ebb\bigg[\int_0^{T}e^{-rt}\bigg((\phi(t))^{'}\Big(\frac{\Si(\z_{-n}(t)-\phi(t)-\psi(t))}{\bde(N-1)}+\frac{2\La}{N-1}(a^{\dot{\psi}}(t)-r\dot{\psi}(t))\Big) \notag \\
&-\frac{1}{2\bde}(\phi(t)+\zeta_{n}(t))^{'}\Si (\phi(t)+\zeta_{n}(t))-(\dot{\phi}(t))^{'}\Big(\La+\frac{2\La}{N-1}\Big)\dot{\phi}(t)\bigg)dt\bigg].
\end{flalign}
\end{lemma}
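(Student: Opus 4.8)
The plan is to compute the frictional price-impact process of investor $n$ — the functional that returns the market return $\nu$ when the remaining $N-1$ investors play their competitive frictional optimal strategies — substitute it into the frictional objective \eqref{eq:fr_objective}, and then simplify the cross term via an integration-by-parts identity. First I would aggregate the characterizing FBSDE \eqref{eq:fr_solution_compe} over $m\in\II\setminus\{n\}$: writing $\Phi:=\sum_{m\neq n}\phi_{m}$, $\dot\Phi:=\sum_{m\neq n}\dot\phi_{m}$ and invoking Assumption \ref{ass:equal_tolerance} (so that $\de_m=\bde$ and, by \eqref{eq:frless_comp_strategies}, $\sum_{m\neq n}\pnf_{m}=\bde(N-1)\Si^{-1}\nu-\z_{-n}$), the pair $(\Phi,\dot\Phi)$ satisfies $d\Phi(t)=\dot\Phi(t)\,dt$, $\Phi(0)=0$, together with
\begin{equation*}
d\dot\Phi(t)=dM(t)+\frac{\La^{-1}\Si}{2\bde}\big(\Phi(t)-\bde(N-1)\Si^{-1}\nu(t)+\z_{-n}(t)\big)\,dt+r\dot\Phi(t)\,dt,\qquad \dot\Phi(T)=0,
\end{equation*}
for the square-integrable martingale $M:=\sum_{m\neq n}M_m$. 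The market-clearing condition then forces $\Phi=-\phi-\psi$, where $\phi:=\phi_n$ is the strategic demand, hence $\dot\Phi=-\dot\phi-\dot\psi$, and since $\phi,\psi\in\Wc_\Ac$ we have $d\dot\Phi(t)=-\big(a^{\dot\phi}(t)+a^{\dot\psi}(t)\big)dt-\big(dM^{\dot\phi}(t)+dM^{\dot\psi}(t)\big)$.

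Next I would identify the finite-variation and martingale parts of the displayed equation after this substitution. Matching martingale parts merely pins down $M=-(M^{\dot\phi}+M^{\dot\psi})$ and carries no further content; matching the drift parts gives a linear algebraic relation which, solved for $\nu$ by multiplying through by $2\bde\Si^{-1}\La$, produces the frictional analogue of the price-impact process \eqref{eq:impact},
\begin{equation*}
\nu_{n}(t;\phi)=\frac{\Si\big(\z_{-n}(t)-\phi(t)-\psi(t)\big)}{\bde(N-1)}+\frac{2\La}{N-1}\big(a^{\dot\phi}(t)-r\dot\phi(t)\big)+\frac{2\La}{N-1}\big(a^{\dot\psi}(t)-r\dot\psi(t)\big).
\end{equation*}
Under Assumption \ref{ass:standing2} each summand lies in $\Hci$, so $\nu_n(\cdot;\phi)\in\Hci$ and the substitution into a competitive-type functional is admissible.

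Finally I would plug $\nu=\nu_n(\cdot;\phi)$ into \eqref{eq:fr_objective}. All resulting terms already appear in \eqref{eq:fr_objective_impact} except the cross term $\tfrac{2}{N-1}\Ebb\big[\int_0^T e^{-rt}\phi(t)'\La\big(a^{\dot\phi}(t)-r\dot\phi(t)\big)dt\big]$, which must be reshaped. Applying It\^o's formula to $e^{-rt}\phi(t)'\La\dot\phi(t)$ — using that $\phi$ is of finite variation with $\dot\phi=d\phi/dt$, that $\La$ is diagonal hence symmetric, and that $d\dot\phi=a^{\dot\phi}dt+dM^{\dot\phi}$ — then integrating over $[0,T]$, the boundary term vanishes because $\phi(0)=0$ and $\dot\phi(T)=0$, and upon taking expectations the stochastic integral vanishes since its integrand is bounded and $M^{\dot\phi}\in\Hci$ is square-integrable, leaving
\begin{equation*}
\Ebb\Big[\int_0^T e^{-rt}\phi(t)'\La\big(a^{\dot\phi}(t)-r\dot\phi(t)\big)\,dt\Big]=-\,\Ebb\Big[\int_0^T e^{-rt}\dot\phi(t)'\La\dot\phi(t)\,dt\Big].
\end{equation*}
Adding this identity (with its factor $\tfrac{2}{N-1}$) to the original trading-cost term $-\dot\phi'\La\dot\phi$ produces the coefficient $\La+\tfrac{2\La}{N-1}$, and collecting the remaining terms gives exactly \eqref{eq:fr_objective_impact}.

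The main obstacle is the bookkeeping in the drift-matching step — inverting the linear relation for $\nu$ while correctly tracking the $\Si$, $\La$ and $\bde(N-1)$ factors — together with the integrability checks: one must confirm that $\nu_n(\cdot;\phi)\in\Hci$ so the reformulated objective is well defined on $\Ac$, and that the stochastic integral in the integration-by-parts identity is a genuine martingale. These are precisely the points for which the restriction to the classes $\Ac$, $\Hci$ and $\Wc_\Ac$ in Assumption \ref{ass:standing2} is needed.
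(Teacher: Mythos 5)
Your proposal is correct and follows essentially the same route as the paper: aggregate the competitive FBSDE \eqref{eq:fr_solution_compe} over $m\neq n$, use market clearing of demands and trading rates to solve for the frictional price-impact process $\nu_n(\cdot;\phi)$ (identical to the paper's expression), substitute into \eqref{eq:fr_objective}, and convert the cross term $\phi'\La(a^{\dot\phi}-r\dot\phi)$ into $-\dot\phi'\La\dot\phi$ by integration by parts with the boundary conditions $\phi(0)=0$, $\dot\phi(T)=0$. The paper performs this last step in two stages (identities \eqref{ks2} and \eqref{ks2.5}) rather than one application of It\^o's formula, but the content and the integrability checks you flag are the same.
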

\begin{remark}
Comparison of \eqref{eq:fr_objective_impact} and \eqref{eq:best_response_objective} directly indicates the impact of the transaction costs on the investor's objective functional. While the term $\Si(\z_{-n}-\phi_{n}-\psi)/\bde(N-1)$ is common in both cases (assuming equal risk tolerances), transaction costs add two additional terms: 
one in the linear part which depends on noise demand $2\La(a^{\dot{\psi}}-r\dot{\psi})/(N-1)$ and another in the quadratic part (on top of $\dot{\phi}'\Lambda\dot{\phi}$). In particular, for the first term we get that if noise traders sell at a constant rate, positive investor's demand increases her utility due to price impact. This effects comes solely due to the presence of transaction costs. On the contrary, the second term implies that price impact does increase the effect of the transaction costs, even when there is no noise traders.  
\end{remark}
Based on a calculus of variations argument, we get the following result for the frictional best-response strategy of investor $n$:
\begin{proposition}\label{prop:mes}
Impose Assumptions \ref{ass:equal_tolerance} and \ref{ass:standing2}. The unique solution $(\pfp,\pfdp)$ of \eqref{eq:fr_objective_impact} is characterized by the following FBSDE:
\begin{equation}\label{eq:mes}
\begin{aligned}
&d\pfp(t)=\pfdp(t)dt, \ \ \ \pfp(0)=0, \\
&d\pfdp(t)=d\widetilde{M}_{n}(t)+B(\pfp(t)-\mathbb{TP}_{n}(t))dt+r\pfdp(t)dt, \ \ \ \pfdp(T)=0,
\end{aligned}
\end{equation}
where: 
\begin{equation}\label{eq:TP_and_B}
\mathbb{TP}_{n}:=\pnfp_{n}+\frac{2\bde\Si^{-1}\La}{N+1}(a^{\dot{\psi}_{n}}-r\dot{\psi}),\qquad B:=\frac{\La^{-1}\Si}{2\bde},
\end{equation}
$\pnfp_{n}$ is the best-response strategy in frictionless case, given in \eqref{eq:best_response_solution} and $\widetilde{M}_{n}\in\Hci$ is a continuous $\Fbb$-martingale derived as part of the solution.
\end{proposition}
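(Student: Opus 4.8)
\emph{Proof plan.}
The plan is to recognize that, after expanding the quadratic forms in \eqref{eq:fr_objective_impact} and collecting terms, $\widetilde{\Fc}_{\La,n}$ is --- up to an additive $\dot\phi$-independent constant --- a \emph{competitive} frictional functional of the form \eqref{eq:fr_objective} with suitably modified parameters, so that the characterization \cite[Lemma~4.1]{C} and the well-posedness \cite[Theorem~A.4]{C} can be transported directly. Using the symmetry of $\Si$, the coefficient of $\phi'\Si\phi$ in \eqref{eq:fr_objective_impact} collects to $-k_n$ with $k_n=\tfrac{1}{(N-1)\bde}+\tfrac{1}{2\bde}$ (the same $k_n$ as in the frictionless best-response problem under equal tolerances), the coefficient of $\dot\phi'\La\dot\phi$ collects to $-\tfrac{N+1}{N-1}$, the linear-in-$\phi$ part collects to $\phi'\hat\nu$ with $\hat\nu:=\Si\big(\tfrac{\z_{-n}-\psi}{(N-1)\bde}-\tfrac{\z_n}{\bde}\big)+\tfrac{2\La}{N-1}(a^{\dot\psi}-r\dot\psi)$, and the leftover $-\tfrac{1}{2\bde}\z_n'\Si\z_n$ is constant.

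Setting $\tilde\de:=\tfrac{1}{2k_n}=\tfrac{(N-1)\bde}{N+1}>0$ and $\tilde\La:=\tfrac{N+1}{N-1}\La$ (still diagonal, positive definite), it follows that $\widetilde{\Fc}_{\La,n}(\dot\phi)$ coincides, up to a constant, with the functional in \eqref{eq:fr_objective} for risk tolerance $\tilde\de$, cost matrix $\tilde\La$, and any return/exposure pair $(\tilde\nu,\tilde\z)$ obeying $\tilde\nu-\tfrac{1}{\tilde\de}\Si\tilde\z=\hat\nu$ --- for instance $\tilde\z:=\z_n$ and $\tilde\nu:=\hat\nu+\tfrac{1}{\tilde\de}\Si\z_n$, both of which lie in $\Hci$ under Assumptions~\ref{ass:equal_tolerance}--\ref{ass:standing2} because $\z_m,\nu\in\Hci$ and $\psi\in\Wc_{\Ac}$ forces $\dot\psi\in\Ac$, hence $a^{\dot\psi},\dot\psi\in\Hci$. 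The key point is that the ``frictionless competitive optimal'' entering the drift of \eqref{eq:fr_solution_compe} for this modified problem, namely $\tilde\de\Si^{-1}\tilde\nu-\tilde\z$, does not depend on the chosen split and equals $\tilde\de\Si^{-1}\hat\nu$; substituting $\tilde\de=\tfrac{(N-1)\bde}{N+1}$ and using $\la_n=1/N$ in \eqref{eq:best_response_solution}, so that $\pnfp_n=\tfrac{(\z_{-n}-\psi)-(N-1)\z_n}{N+1}$, yields $\tilde\de\Si^{-1}\hat\nu=\pnfp_n+\tfrac{2\bde\Si^{-1}\La}{N+1}(a^{\dot\psi}-r\dot\psi)=\mathbb{TP}_n$; a parallel computation gives $\tfrac{\tilde\La^{-1}\Si}{2\tilde\de}=\tfrac{\La^{-1}\Si}{2\bde}=B$.

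With this dictionary in place, \cite[Lemma~4.1]{C} applied to the modified competitive problem --- which is strictly concave in $\dot\phi$ because of the trading-cost term $-\dot\phi'\tilde\La\dot\phi$ --- produces a unique maximizer over $\Hc$, characterized by \eqref{eq:fr_solution_compe} with parameters $(\tilde\de,\tilde\La,\tilde\nu,\tilde\z)$, which is exactly \eqref{eq:mes} with $\mathbb{TP}_n$ and $B$ as in \eqref{eq:TP_and_B}; and \cite[Theorem~A.4]{C} places this solution in $\Wc_{\Ac}\times\Ac$ with accompanying martingale $\widetilde{M}_n\in\Hci$. Since $\Ac\subset\Hc$ and the maximizer already lies in $\Ac$, it is simultaneously the unique maximizer of $\widetilde{\Fc}_{\La,n}$ over $\Ac$, hence it equals $(\pfp,\pfdp)$, which proves the claim.

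The main obstacle is essentially the bookkeeping of the $\tfrac{1}{N-1}$-type factors arising from Lemma~\ref{lem:fr_objective_impact}: these must be tracked carefully so that $\tilde\de$ and $\tilde\La$ come out exactly as stated and the products $\tilde\de\Si^{-1}\hat\nu$ and $\tfrac{\tilde\La^{-1}\Si}{2\tilde\de}$ collapse to $\mathbb{TP}_n$ and $B$; once this algebra is settled the analytic content is entirely inherited from \cite{C}. A secondary point requiring care is the regularity transfer --- verifying that $\tilde\nu,\tilde\z$ meet the $\Hci$-hypotheses of \cite[Theorem~A.4]{C} --- which is precisely where the assumption $\psi\in\Wc_{\Ac}$, ensuring that $a^{\dot\psi}-r\dot\psi$ is a bounded optional process, is used. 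As an alternative that avoids the reduction, one can run a direct calculus-of-variations argument: perturb $\dot\phi$ to $\dot\phi+\varepsilon\dot h$ with $\dot h\in\Ac$, impose $\tfrac{d}{d\varepsilon}\widetilde{\Fc}_{\La,n}(\dot\phi+\varepsilon\dot h)\big|_{\varepsilon=0}=0$, integrate by parts in time using $h(0)=0$ and $\dot h(T)=0$ to isolate the factor multiplying $\dot h$, and apply martingale representation to the resulting conditional-expectation process to produce $\widetilde{M}_n$; this reproduces the same FBSDE but merely re-derives BSDE estimates already available in \cite{C}.
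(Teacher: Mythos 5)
Your proposal is correct, but it takes a genuinely different route from the paper's. The paper proves Proposition \ref{prop:mes} directly: it establishes strict concavity of $\widetilde{\Fc}_{\La,n}$ through the second G\^{a}teaux differential, writes the first-order condition $\big(D_{G}\widetilde{\Fc}_{\La,n}(\dot{\phi}),\dot{\theta}_{n}\big)=0$ for all admissible directions, rearranges it via Fubini--Tonelli and the tower property into the stochastic integral equation $\dot{\phi}_{n}(t)=Be^{rt}\,\Ebb\big[\int_{t}^{T}e^{-rs}\big(\mathbb{TP}_{n}(s)-\phi_{n}(s)\big)ds\,\big|\,\Fc(t)\big]$, and then converts this into \eqref{eq:mes} by integration by parts and martingale representation, also verifying the converse implication --- in other words, essentially the ``alternative'' you sketch in your closing sentence. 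Your main argument instead completes the square in \eqref{eq:fr_objective_impact} and recognizes it, up to a $\phi$-independent constant, as an instance of the competitive frictional problem \eqref{eq:fr_problem} with modified data $\tilde{\de}=\bde(N-1)/(N+1)$, $\tilde{\La}=\tfrac{N+1}{N-1}\La$ and a synthetic return, then imports the characterization from \cite[Lemma~4.1]{C} and the well-posedness from \cite[Theorem~A.4]{C}. Your bookkeeping checks out: $k_{n}=\tfrac{N+1}{2\bde(N-1)}$, the modified frictionless target $\tilde{\de}\Si^{-1}\hat{\nu}$ collapses to $\mathbb{TP}_{n}$ once one substitutes $\la_{n}=1/N$ into \eqref{eq:best_response_solution}, and $\tilde{\La}^{-1}\Si/2\tilde{\de}=\La^{-1}\Si/2\bde=B$; and your remark that the unique maximizer over $\Hc$ already lies in $\Ac\subset\Hc$, hence is also the unique maximizer over the smaller class, correctly reconciles the two optimization domains. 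The reduction buys brevity and reuse of \cite{C}; the paper's direct computation buys a self-contained derivation that explicitly exhibits the martingale $\widetilde{M}_{n}$ and establishes the equivalence between the FBSDE and the first-order condition in both directions, without having to verify that the $\Hci$-hypotheses of the cited results transfer to the modified data (a verification you correctly identify, and carry out, via $\psi\in\Wc_{\Ac}$).
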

The existence and uniqueness of the solution of \eqref{eq:mes} is stated in the following proposition. In fact, this FBSDE is a part of a larger class of linear FBSDEs that do admit  unique (global) solutions.
\begin{proposition}\label{prop:fr_best_response_solution}
Impose Assumptions \ref{ass:equal_tolerance} and \ref{ass:standing2}. The unique optimal demand $\pfp\in\Wc_{\Ac}$ that satisfies \eqref{eq:mes} admits the following explicit form:
\begin{equation}\label{eq:best_response_FBSDE}
\pfp(t)=\int_0^{t}e^{-\int_{s}^{t}F(u)du}\widetilde{\mathbb{TP}}_{n}(s)ds,
\end{equation}
where
\begin{align*}
&F(t):=-\Big(\Delta G(t)-\frac{r}{2}\dot{G}(t)\Big)^{-1}B\dot{G}(t),\quad \Delta :=B+\frac{r^2}{4}I_{d},\quad G(t):=\cosh(\sqrt{\Delta}(T-t))\\
&\widetilde{\mathbb{TP}}_{n}(t):=\Big(\Delta G(t)-\frac{r}{2}\dot{G}(t)\Big)^{-1}\Ebb\bigg[\int_{t}^{T}\Big(\Delta G(s)-\frac{r}{2}\dot{G}(s)\Big)Be^{-\frac{r}{2}(s-t)}\mathbb{TP}_{n}(s)ds\bigg|\Fc(t)\bigg]
\end{align*}
and $B$ is given in \eqref{eq:TP_and_B}.
\end{proposition}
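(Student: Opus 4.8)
By Proposition \ref{prop:mes} the task reduces to solving the linear FBSDE \eqref{eq:mes}; the plan is to do so by the decoupling (Riccati) method, which delivers existence, uniqueness in $\Wc_{\Ac}$, and the closed form in one stroke. Since \eqref{eq:mes} is linear with an inhomogeneous term, its decoupling field is affine in the forward variable, so I look for a representation $\pfdp(t)=-F(t)\pfp(t)+\eta(t)$ with $F\colon[0,T]\to\Rbb^{d\times d}$ a deterministic $C^{1}$ map and $\eta$ an adapted continuous semimartingale. Applying It\^o's formula to this ansatz, using $d\pfp=\pfdp\,dt$, and matching the resulting drift against the backward equation in \eqref{eq:mes} splits the system into: (a) the matrix Riccati ODE $\dot F=F^{2}+rF-B$ with $F(T)=0$ (the terminal value being forced by $\pfdp(T)=0$ together with $\pfp(0)=0$); and (b) the linear BSDE $d\eta(t)=\big((F(t)+rI_{d})\eta(t)-B\,\mathbb{TP}_{n}(t)\big)dt+d\widetilde{M}_{n}(t)$, $\eta(T)=0$, in which $\widetilde{M}_{n}$ is precisely the martingale part of $\eta$.

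For step (a) I linearize the Riccati equation by setting $F=-\dot X X^{-1}$, which turns it into the linear second-order matrix ODE $\ddot X=r\dot X+BX$. Its characteristic exponents are $\tfrac r2\pm\sqrt{\Delta}$ with $\Delta=B+\tfrac{r^{2}}{4}I_{d}$, and since $B=\La^{-1}\Si/(2\bde)$ is similar to the symmetric positive definite matrix $\La^{-1/2}\Si\La^{-1/2}/(2\bde)$, all quantities involved are functions of the single matrix $\sqrt{\Delta}$ and mutually commute, so the matrix manipulations reduce to scalar ones per eigenvalue of $\Delta$. Imposing $\dot X(T)=0$ selects, up to a right constant factor, $X(t)=e^{\frac r2(t-T)}\Delta^{-1}\big(\Delta G(t)-\tfrac r2\dot G(t)\big)$ with $G(t)=\cosh(\sqrt{\Delta}(T-t))$, and a short calculation then yields exactly $F(t)=-\big(\Delta G(t)-\tfrac r2\dot G(t)\big)^{-1}B\dot G(t)$ as in the statement. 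The one genuine obstacle is global-in-time solvability: $F$ must be well defined, bounded and continuous on all of $[0,T]$, i.e.\ $\Delta G(t)-\tfrac r2\dot G(t)=\Delta\cosh(\sqrt\Delta(T-t))+\tfrac r2\sqrt\Delta\sinh(\sqrt\Delta(T-t))$ must stay invertible. Because $\Delta$ has a positive real spectrum and $r\ge 0$, diagonalizing reduces this to $\lambda\cosh(\sqrt\lambda(T-t))+\tfrac r2\sqrt\lambda\sinh(\sqrt\lambda(T-t))>0$ for $\lambda>0$, which is immediate; this positivity, stemming from positive-definiteness of $\Si$ and $\La$ and from $r\ge 0$, is exactly what places \eqref{eq:mes} in the class of globally well-posed linear FBSDEs alluded to in the statement.

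With $F$ bounded, step (b) is a linear BSDE with bounded coefficients and source $B\,\mathbb{TP}_{n}$, and $\mathbb{TP}_{n}\in\Hci$ by \eqref{eq:TP_and_B} and Assumption \ref{ass:standing2} (the term involving $\psi$ is bounded because $\psi\in\Wc_{\Ac}$). Solving it by integrating against the deterministic backward propagator of $\xi\mapsto(F+rI_{d})\xi$ — built from $e^{-rt}X(t)$ and hence explicit — and taking conditional expectations produces $\eta(t)=\widetilde{\mathbb{TP}}_{n}(t)$ exactly as written, with $\widetilde{M}_{n}$ its ($\Hci$-bounded, continuous) martingale part. Substituting back, the forward equation becomes the pathwise linear ODE $d\pfp(t)=\big(-F(t)\pfp(t)+\widetilde{\mathbb{TP}}_{n}(t)\big)dt$, $\pfp(0)=0$, whose variation-of-constants solution is \eqref{eq:best_response_FBSDE}. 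Finally, boundedness of $F$ and $\widetilde{\mathbb{TP}}_{n}\in\Hci$ give $\widetilde{\mathbb{TP}}_{n}\in\Ac$, hence $\pfdp=-F\pfp+\widetilde{\mathbb{TP}}_{n}\in\Ac$ and $\pfp\in\Wc_{\Ac}$; uniqueness follows since the difference of any two solutions of \eqref{eq:mes}, by the same decoupling (valid on all of $[0,T]$ by the non-degeneracy above), has its intercept solving the homogeneous linear BSDE — hence $\equiv 0$ — and therefore solves $d\chi=-F\chi\,dt$ with $\chi(0)=0$, so $\chi\equiv0$. Equivalently, one may invoke the general well-posedness result for this class of linear FBSDEs and then only verify \eqref{eq:best_response_FBSDE} by direct substitution.
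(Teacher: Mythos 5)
Your argument is correct and reaches exactly the formulas in the statement, but it gets there by a different route than the paper. The paper outsources both well-posedness and the shape of the solution to the cited result \cite[Theorem~A.4]{C} for the linear FBSDE class \eqref{ks6}, and then spends most of its proof (i) checking the integrability chain $\mathbb{TP}_{n}\in\Hci\Rightarrow\widetilde{\mathbb{TP}}_{n}\in\Hci\Rightarrow\pfp,\pfdp\in\Hci$ via the uniform bound on $\big(\Delta G(t)-\tfrac r2\dot G(t)\big)^{-1}$, and (ii) verifying by direct substitution (a sequence of changes of variables $\bar Y=e^{-rt/2}Y$, etc.) that the candidate $Y(t)=\int_0^t e^{-\int_s^tF(u)du}\widetilde{\mathbb{TP}}_{n}(s)ds$ satisfies \eqref{eq:mes}. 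You instead rederive the content of that external theorem from scratch: the affine decoupling ansatz $\pfdp=-F\pfp+\eta$, the Riccati equation $\dot F=F^2+rF-B$ with $F(T)=0$, its linearization $F=-\dot XX^{-1}$ leading to $\ddot X=r\dot X+BX$ and hence to $G(t)=\cosh(\sqrt{\Delta}(T-t))$, and the linear BSDE for $\eta$ whose propagator (built from $e^{-rt}X(t)$, using that all matrices involved are polynomials in $\sqrt{\Delta}$ and commute) gives $\eta=\widetilde{\mathbb{TP}}_{n}$. The key analytic fact is identical in both proofs — invertibility of $\Delta G-\tfrac r2\dot G$ on all of $[0,T]$, which reduces to the scalar inequality $\lambda\cosh(\sqrt\lambda(T-t))+\tfrac r2\sqrt\lambda\sinh(\sqrt\lambda(T-t))>0$ — but in your version it appears as global solvability of the Riccati equation rather than as a bound needed for integrability. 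Your uniqueness argument is also sound and self-contained: for \emph{any} solution of \eqref{eq:mes}, the process $\eta:=\pfdp+F\pfp$ solves the same linear BSDE with zero terminal value, so $\eta$ is pinned down, and the forward ODE then forces the solution. What your approach buys is independence from the citation and a conceptual explanation of where $G$ and $\widetilde{\mathbb{TP}}_{n}$ come from; what the paper's approach buys is brevity at the statement level (existence, uniqueness and the formula arrive in one stroke from the cited theorem) at the cost of an unilluminating verification computation. Both are complete; no gap.
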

Based on the above characterization, we are able to provide a closed-form expression for the equilibrium returns under transaction costs when only investor $n$ acts strategically and the rest of the investors are price takers.
\begin{corollary}\label{cor:fr_best_response_return}
Impose Assumptions \ref{ass:equal_tolerance} and \ref{ass:standing2}. In a market with transaction costs, when only investor $n$ acts strategically, the unique equilibrium returns in this context are given by:
\begin{equation}\label{eq:fr_best_response_return}
\Hci\ni\m_{\La,n}:=\mnfp+\frac{2N\La}{N^2-1}(a^{\dot{\psi}}-r\dot{\psi}),  
\end{equation}
where $\mnfp$ is the corresponding frictionless equilibrium given in \eqref{eq:best_response_return}.
\end{corollary}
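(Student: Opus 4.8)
The plan is to close the loop between the market-clearing condition and the FBSDE characterizations already in hand: the competitive frictional optimizers of the $N-1$ price-taking investors, given by \eqref{eq:fr_solution_compe}; the frictional best-response demand $\pfp$ of investor $n$, given by \eqref{eq:mes}; and the explicit frictionless best-response return $\mnfp$. First I would produce the frictional analogue of the price-impact relation \eqref{eq:impact}. Under Assumption \ref{ass:equal_tolerance}, summing the competitive FBSDE \eqref{eq:fr_solution_compe} over all $m\neq n$ and using $\sum_{m\neq n}\pnf_{m}=\bde(N-1)\Si^{-1}\nu-\z_{-n}$ gives an aggregate FBSDE for $\sum_{m\neq n}\phi_{m}$; imposing clearing, $\sum_{m\neq n}\phi_{m}=-\pfp-\psi$, as an equality of continuous semimartingales and matching the finite-variation parts pins down the equilibrium return as an affine function of investor $n$'s demand and the noise demand,
\begin{equation*}
\m_{\La,n}=\frac{\Si(\z_{-n}-\pfp-\psi)}{\bde(N-1)}+\frac{2\La}{N-1}\big(a^{\pfdp}-r\pfdp\big)+\frac{2\La}{N-1}\big(a^{\dot\psi}-r\dot\psi\big),
\end{equation*}
with $a^{\pfdp},a^{\dot\psi}$ the drift rates in the $\Ac$-decompositions of $\pfdp,\dot\psi$. (This is, up to an integration by parts, the very relation underlying the impact functional \eqref{eq:fr_objective_impact}, so it can be read off from the proof of Lemma \ref{lem:fr_objective_impact}.)

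Next I would substitute the best-response dynamics: the backward equation of \eqref{eq:mes} gives $a^{\pfdp}-r\pfdp=B(\pfp-\mathbb{TP}_{n})$ with $B=\La^{-1}\Si/(2\bde)$, and since $\frac{2\La}{N-1}B=\frac{\Si}{\bde(N-1)}$ the term $\frac{2\La}{N-1}(a^{\pfdp}-r\pfdp)=\frac{\Si}{\bde(N-1)}(\pfp-\mathbb{TP}_{n})$ cancels the $\pfp$ in the first summand, leaving
\begin{equation*}
\m_{\La,n}=\frac{\Si(\z_{-n}-\mathbb{TP}_{n}-\psi)}{\bde(N-1)}+\frac{2\La}{N-1}\big(a^{\dot\psi}-r\dot\psi\big).
\end{equation*}
Inserting $\mathbb{TP}_{n}=\pnfp_{n}+\frac{2\bde\Si^{-1}\La}{N+1}(a^{\dot\psi}-r\dot\psi)$ from \eqref{eq:TP_and_B} and collecting the noise-demand terms, the coefficient of $a^{\dot\psi}-r\dot\psi$ becomes $\frac{2\La}{N-1}-\frac{2\La}{(N-1)(N+1)}=\frac{2N\La}{N^2-1}$, the constant in the claim.

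Finally I would identify the residual term $\Si(\z_{-n}-\pnfp_{n}-\psi)/(\bde(N-1))$ with $\mnfp$: substituting the frictionless best-response \eqref{eq:best_response_solution} into the frictionless price-impact process \eqref{eq:impact} and specializing to $\de_{m}=\bde$ (so $\de_{-n}=\bde(N-1)$) yields precisely $\mnfp=\Si(\z_{-n}-\psi-\pnfp_{n})/(\bde(N-1))$, which closes \eqref{eq:fr_best_response_return}. Uniqueness of $\m_{\La,n}$ then follows from the uniqueness of the best-response demand (Proposition \ref{prop:fr_best_response_solution}) and of the competitive frictional optimizers (\cite[Theorem~A.4]{C}), since together with clearing these determine $\nu$ uniquely; and $\m_{\La,n}\in\Hci$ because $\pfp,\psi\in\Wc_{\Ac}$ forces $a^{\pfdp},\pfdp,a^{\dot\psi},\dot\psi\in\Hci$ while $\z_{-n}\in\Hci$ by Assumption \ref{ass:standing2}. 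I expect the main obstacle to be the first step: executing the aggregation of the $N-1$ coupled competitive FBSDEs cleanly and verifying that the $\nu$ delivered by the clearing identity is a genuine equilibrium return — in particular that the martingale components are consistent — which is exactly the place where Assumption \ref{ass:equal_tolerance} is needed, rendering all price-takers symmetric so that the aggregate FBSDE decomposes back into the $N-1$ individual ones.
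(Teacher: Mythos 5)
Your proposal is correct and follows essentially the same route as the paper: it derives the frictional price-impact relation by aggregating the price-takers' FBSDEs and imposing clearing (this is exactly the relation \eqref{ks1} established in the proof of Lemma \ref{lem:fr_objective_impact}), then substitutes the backward equation of \eqref{eq:mes} so that $a^{\pfdp}-r\pfdp=B(\pfp-\mathbb{TP}_{n})$ cancels the $\pfp$-dependence, and identifies the residual with $\mnfp$. The algebra, the coefficient $\tfrac{2N\La}{N^2-1}$, and the uniqueness/$\Hci$ arguments all match the paper's proof.
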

\begin{remark}\label{rem:nonoise}
A direct consequence of \eqref{eq:fr_best_response_return} is that equal risk tolerances result in equilibrium returns under single investor's price impact with transaction costs being equal to the corresponding frictionless equilibrium if there is no noise traders (i.e.~no liquidity premium is created). This is in line with the price-taking equilibrium. Indeed, as shown in \cite{C}, quadratic transaction costs do not change the equilibrium returns as long as there is homogeneous risk aversion and absence of noise traders. This also holds when a single and (as we shall verify below) all investors act strategically. When noise traders are present, positive liquidity premium is created when the rate of noise demand is sufficiently high (or simply positive when $r=0$). In fact, comparing \eqref{eq:fr_best_response_return} to the Corollary 5.4 of \cite{C}, the strategic behaviour of a single investor increases the equilibrium return (and hence the liquidity premium) with transaction costs. From \eqref{eq:fr_best_response_return}, we also verify that $\m_{\La,n}$ converges to $\mnfp$ (i.e.~liquidity premium vanishes) when $\La$ goes to zero. 
\end{remark}

\subsection{The Nash equilibrium returns in a market with transaction costs}\label{sec:fr_Nash}

We are now ready to generalize the concepts presented in \S\ref{sub:2} in a market with transaction costs. To this end, we extend Assumption \ref{ass:standing2} to each $\phi_{m}, \  m\in \II$ and begin with a characterization of the investors' demands when all of them act strategically. More precisely, by shifting the investors' exposure, in accordance to Definition \ref{def:nash_frless}, the linear system \eqref{eq:Nash_system_frless} becomes the following system of linear FBSDEs, under the same boundary value conditions as \eqref{eq:mes}, for $ m\in \II$:
\begin{equation}\label{eq:Nash_system_fr}
\begin{aligned}
&d\pfnm(t)=\pfdnm(t)dt, \\
&d\pfdnm(t)=d\widetilde{M}_{m}(t)+B(\pfnm(t)-\breve{\mathbb{TP}}_{m}(t))dt+r\pfdnm(t)dt,
\end{aligned}
\end{equation}
where:
\begin{equation*}
\breve{\mathbb{TP}}_{m}:=\frac{\overbrace{\sum_{i=1\, ,i\neq m}^{N}\Big(\bde\Si^{-1}\nu-\breve{\phi}_{\La,i}\Big)}^{\z_{-m}(\breve{\phi}_{\La})}-\psi-(N-1)\z_{m}}{N+1}+\frac{2\bde\Si^{-1}\La}{N+1}(a^{\dot{\psi}_{n}}-r\dot{\psi}).
\end{equation*}
The uniqueness of the Nash equilibrium returns also holds in the market with transaction costs and is a consequence of the following lemma:
\begin{lemma}\label{lem:ks}
The system of linear FBSDEs \eqref{eq:Nash_system_fr} admits a unique solution such that $\pfnm \in\Wc_{\Ac}$, for all $m\in \II$.
\end{lemma}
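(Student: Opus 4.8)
The plan is to decouple the system \eqref{eq:Nash_system_fr} by exploiting the market-clearing identity that is part of Definition \ref{def:nash_frless}, and then to appeal to the global existence and uniqueness theory for linear FBSDEs of the type \eqref{eq:mes}, i.e.~Proposition \ref{prop:fr_best_response_solution} (whose proof applies verbatim to any such FBSDE with $\Hci$-valued target process, cf.~also \cite[Theorem~A.4]{C}). The point is that \eqref{eq:Nash_system_fr} is coupled in two ways: through the aggregate of the other investors' demands $\z_{-m}(\pfna)=\sum_{i\neq m}(\bde\Si^{-1}\nu-\breve{\phi}_{\La,i})$ entering $\breve{\mathbb{TP}}_m$, and through the endogenous return $\nu$, which is itself determined by the market-clearing requirement $\sum_{m\in\II}\pfnm+\psi=0$ (pointwise in $(t,\omega)$). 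Untangling these two couplings is the heart of the argument.

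First I would substitute $\sum_{i\neq m}\breve{\phi}_{\La,i}=-\psi-\pfnm$ (market clearing) into $\breve{\mathbb{TP}}_m$, so that the driver of the $m$-th backward equation depends only on $\pfnm$ (linearly), on $\nu$, and on the exogenous processes $\z_m,\psi,a^{\dot{\psi}},\dot{\psi}$. Then I would sum the $N$ equations: the aggregate $\breve{\Phi}:=\sum_{m\in\II}\pfnm$ satisfies a linear FBSDE of the same structural shape (with ``speed'' matrix $\tfrac{N}{N+1}B$ and a target process still involving $\nu$), and since market clearing gives $\breve{\Phi}=-\psi\in\Wc_{\Ac}$ by Assumption \ref{ass:standing2}, the forward part is automatically satisfied while matching the drift of the backward part produces an \emph{algebraic} identity for $\nu$. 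Because the coefficient of $\nu$ there is a nonzero multiple of $\La^{-1}$ --- the scalar multiple being proportional to $N(N-1)$, hence nonvanishing for $N\geq 2$ --- and $\La$ is positive definite, this identity determines $\nu$ uniquely; being an explicit linear combination of $\z=\sum_{m\in\II}\z_m$, $\psi$, $a^{\dot{\psi}}$ and $\dot{\psi}$, the resulting $\nu$ lies in $\Hci$ under Assumption \ref{ass:standing2}.

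With $\nu$ fixed as an $\Hci$-process, each of the $N$ equations in \eqref{eq:Nash_system_fr} decouples into an autonomous linear FBSDE of exactly the form \eqref{eq:mes}: the matrix $B$ is replaced by $\tfrac{N}{N+1}B=\big(\tfrac{N+1}{N}\La\big)^{-1}\Si/(2\bde)$, which is again of admissible form (a positive definite diagonal matrix appears where $\La$ did), and the target process becomes a fixed element of $\Hci$ assembled from $\nu,\z_m,a^{\dot{\psi}},\dot{\psi}$. Proposition \ref{prop:fr_best_response_solution} then yields, for each $m\in\II$, a unique triple $(\pfnm,\pfdnm,\widetilde{M}_m)$ with $\pfnm\in\Wc_{\Ac}$. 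It remains to verify consistency: the sum of these solutions solves the aggregate linear FBSDE of the previous step which, by the way $\nu$ was chosen, also admits $-\psi$ (together with the appropriate martingale) as a solution; uniqueness for that FBSDE then forces $\sum_{m\in\II}\pfnm=-\psi$, so market clearing holds and $(\pfnm)_{m\in\II}$ genuinely solves \eqref{eq:Nash_system_fr}. For uniqueness of the full system: any solution has aggregate demand $-\psi$ (market clearing being built into the notion of equilibrium), hence its return satisfies the same algebraic identity and coincides with the $\nu$ constructed above, and then each component solves the same decoupled FBSDE and coincides with $\pfnm$ by the uniqueness in Proposition \ref{prop:fr_best_response_solution}.

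The main obstacle is exactly this double coupling: unlike the single-strategic-investor FBSDE \eqref{eq:mes}, each equation in \eqref{eq:Nash_system_fr} involves both the other players' unknown demands and the unknown endogenous return, so no single-equation result applies directly. The device of summing and using market clearing twice --- once to eliminate $\sum_{i\neq m}\breve{\phi}_{\La,i}$ from each driver, and once (via the aggregate) to pin down $\nu$ algebraically --- is what reduces everything to $N$ independent copies of \eqref{eq:mes}. The secondary points are routine but must be checked: that the rescaled matrix $\tfrac{N+1}{N}\La$ is still positive definite and diagonal so that Proposition \ref{prop:fr_best_response_solution} applies as stated; that every process entering the reduced drivers stays in $\Hci$ under Assumption \ref{ass:standing2}; and that $N\geq 2$ (as assumed throughout the section) so the linear equation for $\nu$ is non-degenerate. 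An alternative to the summation would be to diagonalise the $N\times N$ coupling matrix (all-ones minus identity), but the market-clearing reduction is cleaner and, as a by-product, delivers the closed form of the frictional Nash equilibrium return used in the next subsection.
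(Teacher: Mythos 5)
Your argument is correct in substance but takes a genuinely different route from the paper's, and in doing so it proves a slightly different statement. The paper treats $\nu$ as an exogenous element of $\Hci$ throughout Lemma \ref{lem:ks}: it stacks the $N$ equations into a single $dN$-dimensional linear FBSDE with block coefficient matrix $\mathbf{B}=C\otimes B$, where $C$ is the $N\times N$ matrix with unit diagonal and off-diagonal entries $1/(N+1)$, checks via the Kronecker-product structure that all eigenvalues of $\mathbf{B}$ are positive, and then invokes the general existence/uniqueness theory of \cite[Theorem~A.4]{C} for the coupled system; market clearing is only imposed afterwards, in Theorem \ref{thm:fr_Nash}, to pin down $\nu$. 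You instead build market clearing into the system from the outset, substitute $\sum_{i\neq m}\breve{\phi}_{\La,i}=-\psi-\pfnm$ to decouple the equations (correctly identifying the rescaled speed matrix $\tfrac{N}{N+1}B$), determine $\nu$ from the aggregate drift identity, solve $N$ independent copies of the single-agent FBSDE via Proposition \ref{prop:fr_best_response_solution}, and close the loop with a consistency check that recovers market clearing from uniqueness of the aggregate FBSDE. That consistency check is the essential step and you supply it, so your construction does yield existence and uniqueness of the frictional Nash equilibrium (demands and returns together), and your algebra for $\nu$ matches \eqref{eq:fr_Nash}. The one caveat worth recording is that your proof does not literally establish Lemma \ref{lem:ks} as stated, i.e.\ well-posedness of \eqref{eq:Nash_system_fr} for an \emph{arbitrary} fixed $\nu\in\Hci$ without the clearing condition --- your decoupling substitution is unavailable in that generality --- so if the lemma were ever needed off-equilibrium (say, in a fixed-point iteration over $\nu$) the paper's Kronecker argument would still be required; conversely, your route buys a reduction from one $dN$-dimensional FBSDE to $N$ independent $d$-dimensional ones and delivers the closed form of $\mfn$ as a by-product, effectively merging the lemma with Theorem \ref{thm:fr_Nash}.
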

With the above result in hand we are ready to derive the point at which the frictional market equilibrates when all the agents act strategically.
\begin{theorem}\label{thm:fr_Nash}
Impose Assumptions \ref{ass:equal_tolerance} and \ref{ass:standing2}. In a market with transaction costs, the unique Nash equilibrium returns are given by:
\begin{equation}\label{eq:fr_Nash}
\Hci\ni\mfn:=\mnfn+\frac{2\La}{N(N-1)}(a^{\dot{\psi}}-r\dot{\psi}),
\end{equation}
where $\mnfn$ is the corresponding frictionless Nash equilibrium (for equal risk tolerances) and $B$ is defined in \eqref{eq:TP_and_B}.
\end{theorem}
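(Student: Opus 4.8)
The plan is to recover the equilibrium return purely by \emph{aggregating} the Nash FBSDE system \eqref{eq:Nash_system_fr} over all investors and invoking market clearing, so that the coupled FBSDEs never have to be solved individually. By Lemma~\ref{lem:ks} the Nash optimal demands $(\pfnm)_{m\in\II}\subset\Wc_{\Ac}$ exist and are unique, hence so is the return $\nu$ they induce through the zero-net-supply condition $\sum_{m\in\II}\pfnm+\psi=0$; the whole task is therefore to compute $\nu$ in closed form.

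To that end I would sum the two parts of \eqref{eq:Nash_system_fr} over $m\in\II$. Writing $\sum_{m\in\II}\pfnm=-\psi$ and $\sum_{m\in\II}\pfdnm=-\dot\psi$, the forward equation becomes the tautology $-d\psi(t)=-\dot\psi(t)\,dt$, consistent with $\psi\in\Wc_{\Ac}$, while the backward equation turns into
\[
-d\dot\psi(t)=d\Big({\textstyle\sum_{m\in\II}\widetilde M_m(t)}\Big)+B\Big({-\psi(t)-\textstyle\sum_{m\in\II}\breve{\mathbb{TP}}_m(t)}\Big)\,dt-r\dot\psi(t)\,dt .
\]
Since $\dot\psi\in\Ac$ admits the canonical decomposition $d\dot\psi(t)=a^{\dot\psi}(t)\,dt+dM^{\dot\psi}(t)$, uniqueness of the semimartingale decomposition lets me identify finite-variation parts (the martingale parts merely give $\sum_{m\in\II}\widetilde M_m=-M^{\dot\psi}$ and are irrelevant), producing the pointwise identity $\sum_{m\in\II}\breve{\mathbb{TP}}_m(t)=-\psi(t)+B^{-1}\big(a^{\dot\psi}(t)-r\dot\psi(t)\big)$, with $B^{-1}=2\bde\Si^{-1}\La$ by \eqref{eq:TP_and_B}.

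Independently I would expand $\sum_{m\in\II}\breve{\mathbb{TP}}_m$ from its definition: using clearing once more, $\z_{-m}(\breve{\phi}_{\La})=\sum_{i\neq m}(\bde\Si^{-1}\nu-\breve{\phi}_{\La,i})=(N-1)\bde\Si^{-1}\nu+\psi+\pfnm$, so summing over $m$ gives
\[
{\textstyle\sum_{m\in\II}\breve{\mathbb{TP}}_m}=\frac{N(N-1)\bde\Si^{-1}\nu-\psi-(N-1)\z}{N+1}+\frac{2N\bde\Si^{-1}\La}{N+1}\big(a^{\dot\psi}-r\dot\psi\big).
\]
Equating the two expressions for $\sum_{m\in\II}\breve{\mathbb{TP}}_m$, multiplying by $N+1$, simplifying the terms in $\psi$ and in $a^{\dot\psi}-r\dot\psi$ (using $2(N+1)\bde\Si^{-1}\La-2N\bde\Si^{-1}\La=2\bde\Si^{-1}\La$), and then applying $\Si/(N(N-1)\bde)$, I arrive at
\[
\nu=\frac{\Si\z}{N\bde}-\frac{\Si\psi}{\bde(N-1)}+\frac{2\La}{N(N-1)}\big(a^{\dot\psi}-r\dot\psi\big).
\]
Under Assumption~\ref{ass:equal_tolerance} one has $\la_m=1/N$, so the first two terms coincide with the frictionless Nash return $\mnfn$ of \eqref{eq:Nash_frless}, and this is exactly \eqref{eq:fr_Nash}; membership $\mfn\in\Hci$ follows from $\z,\psi\in\Hci$ and $\dot\psi,a^{\dot\psi}\in\Hci$ (since $\psi\in\Wc_{\Ac}$).

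Most of the argument is routine, pointwise-in-$(t,\omega)$ linear algebra; the one genuinely delicate step is the passage from the \emph{implicit}, $\nu$-dependent FBSDE system to a closed equation for $\nu$. This rests on two facts: that matching the finite-variation parts of a continuous-semimartingale identity is legitimate (uniqueness of the semimartingale decomposition), and that the self-referential appearance of $\nu$ inside $\z_{-m}(\breve{\phi}_{\La})$ collapses to a solvable linear equation precisely because aggregation reduces the entire ensemble of unknown demands to the prescribed process $\sum_{m\in\II}\pfnm=-\psi$, whose dynamics are known. The harder well-posedness question — existence and uniqueness of the coupled FBSDE-plus-clearing system itself — is already settled by Lemma~\ref{lem:ks}, so it need not be revisited.
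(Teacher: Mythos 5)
Your proposal is correct and follows essentially the same route as the paper's own proof: aggregate the Nash FBSDE system \eqref{eq:Nash_system_fr} over $m$, invoke market clearing for both positions and trading rates, eliminate the martingale part (your appeal to uniqueness of the continuous-semimartingale decomposition is the same mechanism as the paper's observation that the aggregate continuous martingale has finite variation and hence vanishes), and solve the resulting linear equation for $\nu$. The algebra, including the identification $B^{-1}=2\bde\Si^{-1}\La$ and the reduction of the first two terms to $\mnfn$ under $\la_m=1/N$, checks out.
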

\begin{remark}\label{rem:nash}
Based on Remark \ref{rem:nonoise}, we expect that the absence of noise traders and equal risk tolerances would result in Nash equilibrium returns with transaction costs being equal to the frictionless competitive ones. Indeed, this stems directly from \eqref{eq:fr_Nash}. 
In fact, under common risk tolerance, we have seen from \eqref{eq:Nash_competitive_frless} that the absence of noise traders also implies that the frictionless Nash equilibrium equals to the competitive one too. Furthermore, comparing \eqref{eq:fr_Nash} to the frictional competitive equilibrium returns derived in \cite{C}, liquidity premium that stems from transaction costs is higher in the non-competitive case if and only if $N>2$. In words, when more than two strategic investors with the same risk tolerance participate in the market, the effect of transaction costs on liquidity premium is higher due their strategic behavior. In the special case of only two strategic investors the effect of transaction costs on liquidity premium is equal between competitive and non-competitive market's structure. Finally, as expected, when $\La$ tends to zero, liquidity premium due to transaction costs vanishes. 
\end{remark}

\subsection{The case of two investors}

Interestingly, Assumption \ref{ass:equal_tolerance} in Theorem \ref{thm:fr_Nash} can be dropped when considering a market of only two investors, i.e. $N=2$. In fact even in the case of different risk tolerances a strong connection between the frictional and frictionless Nash equilibrium returns remains. This is supported by the following result:
\begin{theorem}\label{thm:fr_Nash_two_inv}
Impose Assumption \ref{ass:standing2} and let $N=2$. In a market with transaction costs, the unique Nash equilibrium returns are given by:
\begin{equation}\label{eq:fr_Nash_two_inv}
\Hci\ni\mfn:=\mnfn+\La(a^{\dot{\psi}}-r\dot{\psi}),
\end{equation}
where $\mnfn$ is the corresponding frictionless Nash equilibrium.
\end{theorem}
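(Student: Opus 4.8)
The plan is to run, for $N=2$, the same three-stage argument behind Theorem~\ref{thm:fr_Nash}—(i) compute the frictional price impact of a single strategic investor, (ii) reduce her objective to a linear--quadratic control problem characterized by a linear FBSDE, (iii) close the Nash fixed point by a coupled pair of linear FBSDEs and read off the returns—but to carry the two possibly unequal tolerances $\de_1,\de_2$ through every step. What makes this possible is that, with $N=2$, the ``rest of the investors'' seen by a strategic investor is a \emph{single} price taker, so Assumption~\ref{ass:equal_tolerance}—used in the general-$N$ case only to aggregate the $N-1$ identical price takers in Lemma~\ref{lem:fr_objective_impact}—is not needed. Concretely, fixing $n\in\II=\{1,2\}$ and letting $m$ be the other investor acting as a price taker, \cite[Lemma~4.1]{C} characterizes $m$'s demand by the FBSDE~\eqref{eq:fr_solution_compe} (with $\de_m$ in place of $\bde$ and $\pnf_m=\de_m\Si^{-1}\nu-\z_m$); reading its drift gives $\phi_m-\pnf_m=2\de_m\Si^{-1}\La(a^{\dot{\phi}_m}-r\dot{\phi}_m)$, and substituting $\pnf_m=\de_m\Si^{-1}\nu-\z_m$ together with the clearing condition $\phi_n+\phi_m+\psi=0$ (so $\dot{\phi}_m=-\dot{\phi}_n-\dot{\psi}$ and $a^{\dot{\phi}_m}=-a^{\dot{\phi}_n}-a^{\dot{\psi}}$) and solving for $\nu$ yields the frictional price-impact process
\begin{equation*}
\nu_{\La,n}(\phi_n)=\frac{\Si\big(\z_m-\psi-\phi_n\big)}{\de_m}+2\La(a^{\dot{\phi}_n}-r\dot{\phi}_n)+2\La(a^{\dot{\psi}}-r\dot{\psi}),\qquad \phi_n\in\Wc_{\Ac},
\end{equation*}
which is the $N=2$, unequal-tolerance analogue of Lemma~\ref{lem:fr_objective_impact}.

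Substituting $\nu_{\La,n}(\phi_n)$ into the frictional functional~\eqref{eq:fr_objective}, the only delicate term is the self-impact-through-costs contribution $\Ebb\int_0^T e^{-rt}2\phi_n'\La(a^{\dot{\phi}_n}-r\dot{\phi}_n)\,dt$: writing $\phi_n'\La\,a^{\dot{\phi}_n}\,dt=\phi_n'\La\,d\dot{\phi}_n-\phi_n'\La\,dM^{\dot{\phi}_n}$, discarding the zero-mean stochastic integral, and integrating by parts (the boundary terms vanish because $\phi_n(0)=0$ and $\dot{\phi}_n(T)=0$, and $\phi_n$ carries no quadratic variation), this collapses to $-2\Ebb\int_0^T e^{-rt}\dot{\phi}_n'\La\dot{\phi}_n\,dt$. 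Hence the adjusted objective $\widetilde{\Fc}_{\La,n}$ on $\Ac$ has control running cost $-3\dot{\phi}_n'\La\dot{\phi}_n$ (matching $\La+\tfrac{2\La}{N-1}$ at $N=2$) and, after expansion, the same linear--quadratic shape in $\phi_n$ as in Lemma~\ref{lem:fr_objective_impact} but with $\de_m,\de_n$ separate, the coefficient of $\phi_n'\Si\phi_n$ being $-k_n$ with $k_n=\tfrac{1}{\de_m}+\tfrac{1}{2\de_n}$. A calculus-of-variations argument as for Proposition~\ref{prop:mes} then shows the best response $(\pfp,\pfdp)$ solves the linear FBSDE~\eqref{eq:mes} with $B=\tfrac{k_n}{3}\La^{-1}\Si$ and target $\mathbb{TP}_n=\pnfp_n+\tfrac{1}{k_n}\Si^{-1}\La(a^{\dot{\psi}}-r\dot{\psi})$, $\pnfp_n$ being the frictionless best response~\eqref{eq:best_response_solution}; for $\de_m=\de_n=\bde$ these reduce to~\eqref{eq:TP_and_B}, and existence, uniqueness and the explicit form in $\Wc_{\Ac}$ follow from Proposition~\ref{prop:fr_best_response_solution}.

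For the Nash fixed point, in the spirit of Definition~\ref{def:nash_frless} and system~\eqref{eq:Nash_system_fr}, making both investors strategic replaces the residual exposure $\z_m$ inside $\mathbb{TP}_n$ by the revealed exposure $\z_m(\pfnm)=\de_m\Si^{-1}\nu-\pfnm$ of the now-strategic investor $m$, and symmetrically for $m$; using clearing to eliminate $\pfnm$ turns each target into a function of $\pfn$ and $\nu$ alone, so $(\pfn,\pfnm)$ solves a coupled pair of linear FBSDEs under the boundary data of~\eqref{eq:mes}, well-posed in $\Wc_{\Ac}$ by the argument of Lemma~\ref{lem:ks}. I would solve this $N=2$ system by guess-and-verify: posit $\nu=\mnfn+\La(a^{\dot{\psi}}-r\dot{\psi})$ with $\mnfn$ the frictionless Nash returns of Theorem~\ref{the:Nash_frless} for the given $\de_1,\de_2$, solve each resulting single linear FBSDE explicitly via Proposition~\ref{prop:fr_best_response_solution}, and check that the two demands sum to $-\psi$ and satisfy the terminal conditions; uniqueness then gives $\mfn=\mnfn+\La(a^{\dot{\psi}}-r\dot{\psi})$.

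\textbf{Main obstacle.} The crux is the verification in the last step that the two distinct operators $B_1\ne B_2$ (equivalently $k_1\ne k_2$) governing the coupled FBSDEs, together with the $\de_1,\de_2$-dependence buried in $\mnfn$ and in $\mathbb{TP}_1,\mathbb{TP}_2$, conspire to leave precisely the tolerance-free correction $\La(a^{\dot{\psi}}-r\dot{\psi})$ and a clearing-consistent pair of demands. The identities special to $N=2$—$\de_{-n}=\de_m$, $\la_{-n}=\la_m$ and $\la_1+\la_2=1$—should be exactly what forces this cancellation, so the task is careful bookkeeping rather than a new idea; the integration by parts and the FBSDE well-posedness are routine adaptations of tools already developed in the paper.
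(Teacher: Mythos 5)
Your architecture is the paper's own, carried out with the same coefficients: the single-price-taker impact process $\nu_{\La,n}(\phi_n)=\tfrac{\Si(\z_m-\psi-\phi_n)}{\de_m}+2\La(a^{\dot{\phi}_n}-r\dot{\phi}_n)+2\La(a^{\dot{\psi}}-r\dot{\psi})$ is exactly the paper's, the collapse of the self-impact term to $-2\Ebb\int_0^Te^{-rt}\dot{\phi}_n'\La\dot{\phi}_n\,dt$ is the paper's \eqref{ks2.5} at $N=2$, and your best-response FBSDE data match: $\tfrac{k_n}{3}\La^{-1}\Si$ with $k_n=\tfrac{1}{\de_m}+\tfrac{1}{2\de_n}$ equals the paper's $B_1=\tfrac{\de+\de_1}{3\de_1\de_2}\tfrac{\La^{-1}\Si}{2}$, and $\tfrac{1}{k_n}\Si^{-1}\La(a^{\dot{\psi}}-r\dot{\psi})$ equals the paper's shift $\tfrac{2\de_1\de_2}{\de+\de_1}\Si^{-1}\La(a^{\dot{\psi}}-r\dot{\psi})$. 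The well-posedness of the coupled pair via the Kronecker structure is also as in the paper.

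Where you stop short is precisely the step that produces \eqref{eq:fr_Nash_two_inv}. The paper does not guess and verify: it sums the two drift conditions of the coupled system, uses the clearing identities $\sum_m\pfnm+\psi=0$ and $\sum_m\pfdnm+\dot{\psi}=0$ to eliminate the demands, trading rates and the martingale part, and solves the resulting pointwise identity for $\nu$, obtaining $\nu=\tfrac{\Si}{2\de_1\de_2}(\de_2\z_1+\de_1\z_2)-\tfrac{\de\Si}{2\de_1\de_2}\psi+\La(a^{\dot{\psi}}-r\dot{\psi})$, which the identities $\z-\la_1\z_1-\la_2\z_2=\la_2\z_1+\la_1\z_2$ and $1-\la_1^2-\la_2^2=2\la_1\la_2$ convert into $\mnfn+\La(a^{\dot{\psi}}-r\dot{\psi})$. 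This necessity route is preferable to your verification route for a second reason: it pins down the equilibrium $\nu$ uniquely, whereas uniqueness of the FBSDE solution for a \emph{given} $\nu$ does not by itself exclude other equilibrium returns. Finally, the cancellation you flag as the main obstacle does occur, and for a reason worth making explicit: after substituting clearing into $\breve{\chi}_m$, each investor's effective mean-reversion operator is $B_m/(\la_m+1)=\de\La^{-1}\Si/(6\de_1\de_2)$, which is the same for both investors even though $B_1\neq B_2$; this symmetry is what leaves the tolerance-free correction $\La(a^{\dot{\psi}}-r\dot{\psi})$. As written, your proposal is a correct and faithful plan, but the decisive computation is asserted rather than performed.
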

Note that \eqref{eq:fr_Nash_two_inv} coincides with \eqref{eq:fr_Nash} when $\delta_1=\delta_2$. 

\newpage
\appendix
\section{Proofs of Section \ref{sec:Price_Impact_NoFri}}\label{sec:appendixA}

\begin{proof}[Proof of Proposition \ref{prop:best_response}]
The concavity of the functional $\widetilde{\Fc}_{n}(\phi)$ follows by its quadratic form. More formally, as shown in \cite[Theorem~6.2.1]{convex1}, a sufficient condition for strict concavity in this context is for the second order G\^{a}teaux differential to be strictly negative for every admissible direction. 
Using \eqref{eq:best_response_problem} we conclude that:
\begin{equation*}
\Big(D^2_{G}\widetilde{\Fc}_{n}(\phi)\theta_{n},\theta_{n}\Big)=\Ebb\Bigg[\int_0^{T}e^{-r t}\Big(-k_{n}(\theta_{n}(t))^{'}\Si\theta_{n}(t)\Big)dt\Bigg] < 0, \ \ \ \forall \ \theta_{n}\in\Hc.
\end{equation*}
By the above we have that $\widetilde{\Fc}_{n}(\phi)$ is strictly concave and therefore the solution of \eqref{eq:best_response_problem} is characterized by the first order condition through the G\^{a}teaux differential, i.e. (s.f. \cite[Proposition~II.2.1]{convex2}):
\begin{equation*}
\Big(D_{G}\widetilde{\Fc}_{n}(\phi),\theta_{n}\Big)=0, \ \text{for all} \ \theta_{n}\in\Hc.
\end{equation*}
For this, we calculate:
\begin{equation*}
\begin{aligned}
&\Big(D_{G}\widetilde{\Fc}_{n}(\phi),\theta_{n}\Big)=\Ebb\Bigg[\int_0^{T}e^{-r t}\bigg(-\frac{\psi(t)^{'}\Si}{\delta_{-n}}-k_{n}(\phi_{n}(t))^{'}\Si-\frac{(\z_{n}(t))^{'}\Si}{\de_{n}}+\frac{(\z_{-n}(t))^{'}\Si}{\de_{-n}}\bigg)\theta_{n}(t)dt\Bigg],
\end{aligned}
\end{equation*}
which holds for all admissible directions. Therefore, we get:
\begin{equation*}
-\frac{\psi(t)^{'}\Si}{\de_{-n}}-k_{n}(\phi_{n}(t))^{'}\Si-\frac{(\z_{n}(t))^{'}\Si}{\de_{n}}+\frac{(\z_{-n}(t))^{'}\Si}{\de_{-n}}=0,
\end{equation*}
which in turn leads to the following representation of the frictionless optimizer under the price impact of investor $n$: 
\begin{equation}\label{eq:3.5}
\frac{\z_{-n}(t)-\psi(t)-\frac{\de_{-n}}{\de_{n}}\z_{n}(t)}{\de_{-n}k_{n}}.
\end{equation}
Equivalently $\delta_{-n}k_{n}=(\delta_{n}+\delta)/\delta_{n}$. Hence, \eqref{eq:3.5} rearranges to the desired result, which belongs in $\Hc$ by linearity.
\end{proof}

\begin{proof}[Proof of Proposition \ref{prop:best_response_revealed}]
Recalling \eqref{eq:best_response_solution}, \eqref{eq:best_response_return}, \eqref{eq:revealed_exposure} and evaluating the revealed exposure process of investor $n$ for $(\pnfp_{n},\m_{n})$, we get:
\begin{flalign*}
\Si\z_{n}(t;\pnfp)&=\de_{n}\overbrace{\bigg(\frac{\de_{n}\m_{-n}(t)+\de\m(t)}{\de_{n}+\de}\bigg)}^{\mnfp}-\Si\frac{\z_{-n}(t)-\psi(t)-\frac{\de_{-n}}{\de_{n}}\z_{n}(t)}{\de_{-n}k_{n}} \\
&=\frac{\de_{n}\m_{-n}(t)}{\de_{-n}k_{n}}+\frac{\Si\z_{n}(t)}{\de_{-n}k_{n}}+\frac{\frac{\delta_{-n}}{\de_{n}}\Si\z_{n}(t)}{\de_{-n}k_{n}}.
\end{flalign*}
Therefore, we have:
\begin{equation}\label{eq:3.13}
\z_{n}(t;\pnfp)=\frac{\frac{\z_{-n}(t)-\psi(t)}{(1/\de_{n})\de_{-n}}+\z_{n}(t)+\frac{\de_{-n}}{\de_{n}}\z_{n}(t)}{(\de+\de_{n})/\de_{n}}.
\end{equation}
Equivalently, \eqref{eq:3.13} can be written as:
\begin{flalign*}
\z_{n}(t;\pnfp)&=\frac{\frac{\z(t)-\psi(t)}{(1/\de_{n})\delta_{-n}}-\frac{\z_{n}(t)}{(1/\de_{n})\delta_{-n}}+\z_{n}(t)+\frac{\de_{-n}}{\de_{n}}\z_{n}(t)}{\delta/\de_{n}+1} \\
&=\frac{\la_{n}^2}{1-\la_{n}^2}(\z_{-n}(t)-\psi(t))+\frac{1}{1+\la_{n}}\z_{n}(t),
\end{flalign*}
which concludes the proof.
\end{proof}

\begin{proof}[Proof of Corollary \ref{cor:fr_strategies_equality}]
Recalling \eqref{eq:best_response_solution} and \eqref{eq:revealed_exposure}, we get that relation  \eqref{eq:best_response_revealed} can be equivalently written as:
\begin{equation*}
\pnfp_{n}(t)=\de_{n}\Si^{-1}\mnfp(t)-\bigg(\frac{\la_{n}^2}{1-\la_{n}^2}(\z_{-n}(t)-\psi(t))+\frac{1}{1+\la_{n}}\z_{n}(t)\bigg).
\end{equation*}
Furthermore, evaluating \eqref{eq:frless_comp_strategies} at $\m$, yields:
\begin{equation*}
\pnf_{n}(t)=\de_{n}\Si^{-1}\m(t)-\z_{n}(t).
\end{equation*}
Thus, for the ``if" implication using \eqref{eq:best_response_return} and \eqref{eq:3.13} we have:
\begin{flalign*}
0&=\de_{n}\Si^{-1}\Bigg(\frac{\de_{n}\Big(\frac{\Si(\z_{-n}(t)-\psi(t))}{\de_{-n}}-\frac{\Si(\z(t)-\psi(t))}{\de}\Big)}{\de_{n}+\de}\Bigg)-\z_{n}(t;\pnfp)+\z_{n}(t) \\
&=-\de_{n}\Si^{-1}\m(t)-\frac{\de_{-n}}{\de_{n}}\z_{n}(t)+\frac{\de_{-n}+\de_{n}}{\de_{n}}\z_{n}(t) \\
&=\de_{n}\Si^{-1}\m(t)-\z_{n}(t).
\end{flalign*}
For the other direction, we have that:
\begin{flalign*}
0&=\de_{n}\Si^{-1}\frac{\Si(\z(t)-\psi(t))}{\de}-\z_{n}(t) \\
\z_{-n}(t)-\psi(t)&=\frac{\de_{-n}}{\de_{n}}\z_{n}(t),
\end{flalign*}
which together with \eqref{eq:3.5} finishes the proof.
\end{proof}

\begin{proof}[Proof of Theorem \ref{the:Nash_frless}]
Let $\Hc\ni\pnfn_{m}, \ m\in\II$ be the unique solution set of \eqref{eq:Nash_system_frless}. Focusing on investor $n$ and evaluating \eqref{eq:revealed_exposure} at \eqref{eq:Nash_system_frless}, we see that her revealed exposure becomes:
\begin{flalign*}
\z_{n}(t;\pnfn)&=\de_{n}\Si^{-1}\nu(t)-\underbrace{\frac{\la_{n}(\z_{-n}(t;\pnfn)-\psi(t))-\la_{-n}\z_{n}(t)}{\la_{n}+1}}_{(\star)}.
\end{flalign*}
We now use the investors' symmetry to derive the aggregate revealed exposure process $\z(t;\pnfn):=\sum_{m=1}^{N}\z_{m}(t;\pnfn)$. Therefore, adding and substracting $\z_{n}(t;\pnfn)/\de_{-n}k_{n}$ and using the equivalent formulation of \eqref{eq:3.5} for the term $(\star)$, we have:
\begin{flalign*}
\underbrace{\frac{\overbrace{(\delta_{-n}k_{n}-1)}^{\de/\de_{n}}}{\delta_{-n}k_{n}}}_{>0}\z_{n}(t;\pnfn)&=\de_{n}\Si^{-1}\nu(t)-\frac{\z(t;\pnfn)}{\de_{-n}k_{n}}+\frac{\psi(t)}{\de_{-n}k_{n}}+\frac{(\de_{-n}/\de_{n})\z_{n}(t)}{\de_{-n}k_{n}} \\
\sum_{m=1}^{N}\z_{m}(t;\pnfn)&=\Si^{-1}\nu(t)\sum_{m=1}^{N}\frac{1}{(1/\de_{m})^2\de}+\Si^{-1}\nu(t)\de-\z(t;\pnfn)\sum_{m=1}^{N}\frac{1}{(1/\de_{m})\delta}\\
&+\psi(t)\sum_{m=1}^{N}\frac{1}{(1/\de_{m})\delta}+\frac{1}{\delta}\sum_{m=1}^{N}\delta_{-m}\zeta_{m}(t).
\end{flalign*}
Therefore, we have for the aggregate exposure process:
\begin{equation}\label{eq:3.16}
\z(t;\pnfn)=\frac{\de\Si^{-1}\nu(t)\Big(1+\sum_{m=1}^{N}\la_{m}^2\Big)+\sum_{m=1}^{N}\la_{-m}\z_{m}(t)+\psi(t)}{2}
\end{equation}
Hence, when all the investors act strategically the market clearing condition becomes:
\begin{equation}\label{eq:3.17}
\pnfn_1(t)+\cdots+\pnfn_{N}(t)+\psi(t)=0, \ \ \ \forall \ t\in[0,T].
\end{equation}
Now by using the fact that $\pnfn_{m}$ for each $m$ (which solves \eqref{eq:Nash_system_frless}) can be equivalently written as: $\delta_{m}\Si^{-1}\nu-\z_{m}(\pnfn)$ for each $m$, and using \eqref{eq:3.16}, we get that \eqref{eq:3.17} becomes:
\begin{flalign*}
0&=\sum_{m=1}^{N}\Big(\de_{m}\Si^{-1}\nu(t)-\z_{m}(t;\pnfn)\Big)+\psi(t) \\
\de\Si^{-1}\nu(t)-\de\Si^{-1}\nu(t)\sum_{m=1}^{N}\la_{m}^2&=\sum_{m=1}^{N}(1-\la_{m})\z_{m}(t)-\psi(t) \\
\nu(t)-\nu(t)\sum_{m=1}^{N}\la_{m}^2&=\frac{\Si}{\de}\Big(\z(t)-\psi(t)-\sum_{m=1}^{N}\la_{m}\z_{m}(t)\Big).
\end{flalign*}
Noting that the symmetric matrix $\Big(1-\sum_{m=1}^{N}\la_{m}^2\Big)I_{d\times d}$ is positive definite, we arrive at the desired result. 

The uniqueness of the Nash equilibrium returns is a direct consequence of the uniqueness of the solution set $(\pnfn_1,...,\pnfn_{N})$ of the linear system \eqref{eq:Nash_system_frless}. This stems from the fact that the system's coefficient matrix is invertible, which can be easily determined by checking that its transpose is strictly diagonally dominant. Lastly, by linearity we get that the frictionless Nash equilibrium indeed belongs in $\Hc$.
\end{proof}

\begin{proof}[Proof of Proposition \ref{prop:surplus_comparison}]
We begin with the calculation of the asymptotic surplus at the frictionless Nash equilibrium, i.e.: $\mathbb{US}_{1}^{\mnfn}(\pnfn)$ as $\de_1\rightarrow\infty$. To this end, we first need to derive investor's Nash optimal demand, $\pnfn_{1}$. This can be readily obtained by noting that \eqref{eq:Nash_system_frless} can be decoupled through the use of the equilibrium condition: $\sum_{m=1}^{N}\pnfn_{m}+\psi=0$. Hence, we have: 
\begin{flalign}
\pnfn_1&=\frac{\la_1(\z_{-1}(\pnfn)-\psi)-\la_{-1}\z_1}{\la_1+1}=\frac{\la_1(\de_{-1}\Si^{-1}\nu+\pnfn_1)-\la_{-1}\z_1}{\la_1+1} \notag \\
&=\la_1\de_{-1}\Si^{-1}\nu-\la_{-1}\z_1 \label{eq:appendix_surplus_1}.
\end{flalign}
Furthermore, note that for the frictionless Nash equilibrium returns, $\mnfn$, we have:
\begin{equation}\label{eq:appendix_surplus_2}
\lim\limits_{\de_1\rightarrow\infty}\mnfn=\frac{\Si(\z_{-1}-\psi)}{2\de_{-1}}.
\end{equation}
Using \eqref{eq:appendix_surplus_1} and \eqref{eq:appendix_surplus_2}, we readily get the following limit for investor's Nash optimal demand at $\mnfn$:
\begin{equation*}
\lim\limits_{\de_1\rightarrow\infty}\pnfn_1=\frac{1}{2}(\z_{-1}-\psi).
\end{equation*}
We now apply the Dominated Convergence Theorem (DCT) on $dt\otimes d\Pbf$ to justify bringing the limits (w.r.t.~$\de_1$) inside the product integral in $\mathbb{US}_{1}^{\mnfn}(\pnfn)$ (passing between the sequential characterization of limits and back when necessary). To this end, note that $\psi,\z_{m}\in\Hc$ for each $m$ and:
\begin{flalign*}
\|\mnfn\|_2&=\bigg\|\frac{\Si}{\de}\frac{(\z-\psi)-\sum\limits_{m=1}^{N}\la_{m}\z_{m}}{1-\sum\limits_{m=1}^{N}\la_{m}^2}\bigg\|_2 
\leq \frac{1}{\de}\frac{1}{1-\la_1^2-\sum\limits_{m=2}^{N}\la_{m}^2}\underbrace{(\|\Si\|_2\|\z-\psi\|_2+\sum\limits_{m=1}^{N}\|\z_{m}\|_2)}_{=:C} \\
&< \frac{1}{\de\sum\limits_{m=2}^{N}\la_{m}(1-\la_{m})}C <\frac{1}{\de_2\big(1-\frac{\de_2}{\de}\big)}C <\frac{1}{\de_2\big(1-\frac{\de_2}{\de_{-1}}\big)}C.
\end{flalign*} 
With regards to $\pnfn$, we use \eqref{eq:appendix_surplus_1} and the above to conclude that $\|\pnfn_1\|_2\leq\de_{-1}\|\Si^{-1}\|_2\|\mnfn\|_2+\|\z_1\|_2$. Hence,  applying DCT gives the desired result for the asymptotic surplus at the frictionless Nash equilibrium. 

Following similar arguments, we deal with the competitive case. More precisely:
\[\|\m\|_2\leq \frac{1}{\de_{-1}}\|\Si\|_2\|\z-\psi \|_2 \qquad\text{ and }\qquad
\|\pnf_1\|_2\leq \|\z-\psi\|_2+\|\z_1\|_2. \] 
Taking into account that $\lim\limits_{\de_1\rightarrow\infty}\pnf_1'\m=0$ and applying the DCT once more conclude the proof.
\end{proof}

\section{Proofs of Section \ref{sec:Frictions}}\label{sec:appendixB}

\begin{proof}[Proof of Lemma \ref{lem:fr_objective_impact}]
We begin by formally generalizing \eqref{eq:impact}. For this, we work similarly as in the frictionless market and introduce the following conditions for all $(\phi_{n},\dot{\phi}_{n})$:
\begin{align}
&\phi_{n}(t)+\sum_{\substack{m=1\\m\neq n}}^{N}\plm(t)+\psi(t)=0, \ \ \ \forall \ t\in[0,T], \ \forall \ \phi_{n}\in\Wc_{\Ac}, \label{eq:4.2} \\
&\dot{\phi}_{n}(t)+\sum_{\substack{m=1\\m\neq n}}^{N}\pdlm(t)+\dot{\psi}(t)=0,\ \ \ \forall \ t\in[0,T], \label{eq:4.3} 
\end{align}
where the pair $(\plm,\pdlm)$, for $m\in\II\setminus\{n\}$ solves \eqref{eq:fr_solution_compe}. Thus, \eqref{eq:4.3} reads as:
\begin{flalign*}
0&=a^{\dot{\phi}_{n}}(t)dt+\frac{\La^{-1}}{2}\sum_{\substack{m=1\\m\neq n}}^{N}\bigg(\frac{\Si\plm(t)}{\bde}-\Big(\nu(t)-\frac{\Si\z_{m}(t)}{\bde}\Big)\bigg)dt+r\sum_{\substack{m=1\\m\neq n}}^{N}\pdlm(t)dt \\
&+a^{\dot{\psi}}(t)dt+dM^{\dot{\phi}_{n}}(t)+dM^{\dot{\psi}}(t)+\underbrace{\sum_{\substack{m=1\\m\neq n}}^{M}dM_{m}(t)}_{=:dM(t)}.
\end{flalign*}
By using \eqref{eq:4.3} again, we have that $r\sum_{\substack{m=1\\m\neq n}}^{N}\pdlm(t)=-r(\dot{\phi}_{n}(t)+\dot{\psi}(t))$ which in turn yields:
\begin{flalign*}
0&=a^{\dot{\phi}_{n}}(t)dt+\frac{\La^{-1}\Si}{2}\sum_{\substack{m=1\\m\neq n}}^{N}\frac{\plm(t)}{\bde}dt-\frac{\La^{-1}(N-1)}{2}\nu(t)dt+\frac{\La^{-1}\Si}{2}\sum_{\substack{m=1\\m\neq n}}^{N}\frac{\z_{m}(t)}{\bde}dt\\
&-r(\dot{\phi}_{n}(t)+\dot{\psi}(t))dt+a^{\dot{\psi}}(t)dt+\underbrace{dM^{\dot{\phi}_{n}}(t)+dM(t)+dM^{\dot{\psi}}(t)}_{=:dM^{S}(t)}.
\end{flalign*}
Thanks to equal risk tolerances and \eqref{eq:4.2} we get that $\sum_{\substack{m=1\\m\neq n}}^{N}\plm(t)/\bde=-(\phi_{n}(t)+\psi(t))/\bde$, which gives:
\begin{flalign*}
0&=\Big(\frac{2\La}{N-1}a^{\dot{\phi}_{n}}(t)-\frac{2r\La}{N-1}\dot{\phi}_{n}(t)+\frac{\Si(\z_{-n}(t)-\phi_{n}(t)-\psi(t))}{\bde(N-1)}\notag\\
&+\frac{2\La}{N-1}(a^{\dot{\psi}}(t)-r\dot{\psi}(t))-\nu(t)\Big)dt+\frac{2\La}{N-1}dM^{S}(t).
\end{flalign*}
Since the rest of the terms are absolutely continuous, $dM^{S}(t)$ vanishes since $M^{S}$ is a continuous martingale of finite variation (from the stronger absolute continuity). Therefore, we have that the price-impact process of investor $n$, $\nu(\phi_{n},\dot{\phi}_{n})\in\Hci$ in a market with transaction costs is:
\begin{equation}\label{ks1}
\nu(t;\phi_{n},\dot{\phi}_{n})=\Big(\frac{2\La}{N-1}a^{\dot{\phi}_{n}}(t)-\frac{2r\La}{N-1}\dot{\phi}_{n}(t)+\frac{\Si(\z_{-n}(t)-\phi_{n}(t)-\psi(t))}{\bde(N-1)}+\frac{2\La}{N-1}(a^{\dot{\psi}}(t)-r\dot{\psi}(t))\Big).
\end{equation}
The next step is to incorporate \eqref{ks1} into \eqref{eq:fr_objective} in order to introduce price impact in the frictional problem (similarly to what we did for the frictionless case). To this end, note that we have:
\begin{flalign}\label{ks1.5}
\Ebb\bigg[\int_0^{T}e^{-rt}(\phi_{n}(t))^{'}\nu(t) dt\bigg]&=\Ebb\bigg[\int_0^{T}e^{-rt}(\phi_{n}(t))^{'}\Big(\frac{2\La}{N-1}a^{\dot{\phi}_{n}}(t)-\frac{2r\La}{N-1}\dot{\phi}_{n}(t) \notag \\
&+\frac{\Si(\z_{-n}(t)-\phi_{n}(t)-\psi(t))}{\bde(N-1)}+\frac{2\La}{N-1}(a^{\dot{\psi}}(t)-r\dot{\psi}(t))\Big)dt\bigg].
\end{flalign}
We now want assert the following equality:
\begin{equation}\label{ks2}
\begin{aligned}
&\Ebb\bigg[\int_0^{T}e^{-rt}(\phi_{n}(t))^{'}\frac{2\La}{N-1}d\dot{\phi}_{n}(t)\bigg]=\Ebb\bigg[\int_0^{T}e^{-rt}(\phi_{n}(t))^{'}\frac{2\La}{N-1}a^{\dot{\phi}_{n}}(t)dt\bigg].
\end{aligned}
\end{equation}
We begin by examining the well-posedness of the above semimartingale integral. For the finite variation part, we require Lebesgue-Stieltjes integrability, i.e. for the pathwise absolutely continuous $dA^{\dot{\phi}_{n}}_{i}(t):=a^{\dot{\phi}_{n}}_{i}(t)dt$, $i\in\DD$ and $t\in[0,T]$:
\begin{flalign*}
\Ebb\bigg[\int_0^{t}e^{-rt}\frac{2\La^{i,i}}{N-1}|\phi_{n}^{i}(s)|d|A^{\dot{\phi}_{n}}_{i}|(s)\bigg]&=\Ebb\bigg[\int_0^{t}e^{-rt}\frac{2\La^{i,i}}{N-1}|\phi_{n}^{i}(s)a^{\dot{\phi}_{n}}_{i}(s)|ds\bigg], \\
&\leq\frac{2\La^{i,i}}{N-1}\esssup_{\omega\in\Omega}\esssup_{s\in[0,T]}|\phi_{n}^{i}(s)|\Ebb\bigg[\int_0^{T}|a^{\dot{\phi}_{n}}_{i}(s)|ds\bigg]<\infty,
\end{flalign*}
where the above hold by absolute continuity and $\phi_{n}\in\Wc_{\Ac}$, $a^{\dot{\phi}_{n}}\in\Hci$. For the martingale part, we note that $M^{\dot{\phi}_{n}}$ has continuous paths and belongs in $\Hci$. Furthermore, $e^{-rt}(\phi_{n}(t))^{'}2\La/(N-1)$ is It\^{o}-integrable with respect to $M^{\dot{\phi}_{n}}$ (componentwise), since it also belongs in $\Hci$. Therefore, by martingale preservation $\int_0^{t}e^{-rs}(\phi_{n}(s))^{'}2\La/(N-1)dM^{\dot{\phi}_{n}}(s)$ is a continuous martingale, zero at $t=0$. More precisely, we also have $\int_0^{t}e^{-rs}(\phi_{n}(s))^{'}2\La/(N-1)dM^{\dot{\phi}_{n}}(s)\in\Hci$ since for each $i\in\DD$ integration by parts yields:
\begin{flalign}\label{ksref} 
\bigg|\int_0^{t}e^{-rs}\frac{2\La^{i,i}}{N-1}\phi_{n}^{i}(s)dM^{\dot{\phi}_{n}}_{i}(s)\bigg|&=\frac{2\La^{i,i}}{N-1}\bigg|e^{-rt}\phi_{n}^{i}(t)M^{\dot{\phi}_{n}}_{i}(t)-\int_0^{t}e^{-rs}M^{\dot{\phi}_{n}}_{i}(s)(\dot{\phi}_{n}^{i}(s)-r\phi_{n}^{i}(s))ds\bigg|.
\end{flalign}
So, the left-hand side of \eqref{ks2} becomes:

\begin{flalign*}
&\Ebb\bigg[\int_0^{T}e^{-rt}(\phi_{n}(t))^{'}\frac{2\La}{N-1}d\dot{\phi}_{n}(t)\bigg]=\Ebb\bigg[\sum_{i=1}^{d}\int_0^{T}e^{-rt}\phi_{n}^{i}(t)\frac{2\La^{i,i}}{N-1}d\dot{\phi}_{n}^{i}(t)\bigg]\\
&=\Ebb\bigg[\sum_{i=1}^{d}\int_0^{T}e^{-rt}\phi_{n}^{i}(t)\frac{2\La^{i,i}}{N-1}dA^{\dot{\phi}_{n}}_{i}(t)+ \sum_{i=1}^{d}\int_0^{T}e^{-rt}\phi_{n}^{i}(t)\frac{2\La^{i,i}}{N-1}dM_{i}^{\dot{\phi}_{n}}(t)\bigg] \\
&=\Ebb\bigg[\sum_{i=1}^{d}\int_0^{T}e^{-rt}\phi_{n}^{i}(t)\frac{2\La^{i,i}}{N-1}a^{\dot{\phi}_{n}}_{i}(t)dt\bigg]=\Ebb\bigg[\int_0^{T}e^{-rt}(\phi_{n}(t))^{'}\frac{2\La}{N-1}a^{\dot{\phi}_{n}}(t)dt\bigg],
\end{flalign*}
Having established \eqref{ks2}, we apply integration by parts on the left-hand side of \eqref{ks2}, yielding:
\begin{flalign}\label{ks2.5}
&\Ebb\bigg[\int_0^{T}e^{-rt}(\phi_{n}(t))^{'}\frac{2r\La}{N-1}\dot{\phi}_{n}(t)-e^{-rt}(\dot{\phi}_{n}(t))^{'}\frac{2\La}{N-1}\dot{\phi}_{n}(t)dt\bigg]=\Ebb\bigg[\int_0^{T}e^{-rt}(\phi_{n}(t))^{'}\frac{2\La}{N-1}a^{\dot{\phi}_{n}}(t)dt\bigg].
\end{flalign}
Thus, taking into account \eqref{ks1.5}, \eqref{ks2.5} and adjusting \eqref{eq:fr_objective}, we get the desired result.
\end{proof}

\begin{proof}[Proof of Proposition \ref{prop:mes}]
The concavity of \eqref{eq:fr_objective_impact} is readily ensured by its quadratic form. More formally using a sufficient condition for strict concavity, as we did in Proposition \ref{prop:best_response}, through the negativeness of the functional's second G\^{a}teaux differential, we conclude that for all admissible directions $\dot{\theta}_{n}$ s.t. $\theta_{n}\in\Wc_{\Ac}$, we have:
\begin{flalign*}
\Big(D^2_{G}\widetilde{\Fc}_{\La,n}(\dot{\phi})\dot{\theta}_{n},\dot{\theta}_{n}\Big)&=\Ebb\Bigg[\int_0^{'}e^{-rt}\bigg(-(\theta_{n}(t))^{'}\frac{2\Si}{\bde(N-1)}\theta_{n}(t)-(\dot{\theta}_{n}(t))^{'}\frac{4\La}{N-1}\dot{\theta}_{n}(t)\\
&-(\theta_{n}(t))^{'}\frac{\Si}{\bde}\theta_{n}(t)-(\dot{\theta}_{n}(t))^{'}2\La\dot{\theta}_{n}(t)\bigg)dt\Bigg]< 0,
\end{flalign*}
for $\bar{\de}, 1/\bar{\de}(N-1)>0$ and positive definite matrices $\Si$ and $\La$. By the above, we have that $\widetilde{\Fc}_{\La,n}(\dot{\phi})$ is strictly concave and therefore the solution of \eqref{eq:fr_objective_impact} is characterized by the first order condition through the G\^{a}teaux differential, i.e. for all $\dot{\theta}_{n}$ as prescribed above we have that:
\begin{equation*}
\Big(D_{G}\widetilde{\Fc}_{\La,n}(\dot{\phi}),\dot{\theta}_{n}\Big)=0.
\end{equation*}
Applying this to \eqref{eq:fr_objective_impact}, we get:
\begin{flalign*}
\Big(D_{G}\widetilde{\Fc}_{\La,n}(\dot{\phi})&,\dot{\theta}_{n}\Big)=\Ebb\bigg[\int_0^{T}e^{-rt}\bigg(\Big(\frac{(\z_{-n}(t)-2\phi_{n}(t)-\psi(t))^{'}\Si}{\bde(N-1)}+(a^{\dot{\psi}}(t)-r\dot{\psi}(t))^{'}\frac{2\La}{N-1}\\
&-(\phi_{n}(t)+\z_{n}(t))^{'}\frac{\Si}{\bde}\Big)\theta_{n}(t)-(\dot{\phi}_{n}(t))^{'}\Big(\frac{4\La}{N-1}+2\La\Big)\dot{\theta}_{n}(t)\bigg)dt\bigg] \\
&=\Ebb\bigg[\int_0^{T}e^{-rt}\bigg(\Big(\pnfp_{n}(t)+\frac{2\bde\Si^{-1}\La}{N+1}(a^{\dot{\psi}_{n}}(t)-r\dot{\psi}(t))-\phi_{n}(t)\Big)^{'}\frac{(N+1)\Si}{\bde(N-1)}\int_0^{t}\dot{\theta}_{n}(s)ds\\
&-(\dot{\phi}_{n}(t))^{'}\frac{2(N+1)\La}{N-1}\dot{\theta}_{n}(t)\bigg)dt\bigg] \\
&=\Ebb\bigg[\int_0^{T}\bigg(\Ebb\bigg[\int_{0}^{T}f(s)ds\bigg|\Fc(t)\bigg]-\int_{0}^{t}f(s)ds-e^{-rt}(\dot{\phi}_{n}(t))^{'}\frac{2(N+1)\La}{N-1}\bigg)\dot{\theta}_{n}(t)dt\bigg]=0,
\end{flalign*}  
where $\pnfp_{n}$ is given in \eqref{eq:best_response_solution} and 
$$f(t):=e^{-rt}\big(\pnfp_{n}(t)+\frac{2\bde\Si^{-1}\La}{N+1}(a^{\dot{\psi}_{n}}(t)-r\dot{\psi}(t))-\phi_{n}(t)\big)^{'}\frac{(N+1)\Si}{\bde(N-1)}.$$ 
The third equality above holds by Fubini-Tonelli and the tower properties of conditional expectation. Now, since the above has to hold for every admissible direction $\dot{\theta}_{n}$, $D_{G}\widetilde{\Fc}_{\La,n}(\dot{\phi})\in\Ac$, we have that \eqref{eq:fr_objective_impact} has a unique solution if and only if the following stochastic integral equation is uniquely satisfied:
\begin{equation}\label{ks3}
\dot{\phi}_{n}(t)=Be^{rt}\Ebb\bigg[\int_{t}^{T}e^{-rs}\Big(\pnfp_{n}(s)+\frac{2\bde\Si^{-1}\La}{N+1}(a^{\dot{\psi}_{n}}(s)-r\dot{\psi}(s))-\phi_{n}(s)\Big)ds\bigg|\Fc(t)\bigg],
\end{equation}
where $B:=\La^{-1}\Si/2\bde$. In turn, assuming \eqref{ks3} admits a (unique) solution and taking the cadlag version of the $\Fbb$-martingale 
$$\Ebb\bigg[\int_{0}^{T}e^{-rs}\Big(\pnfp_{n}(s)+\frac{2\bde\Si^{-1}\La}{N+1}(a^{\dot{\psi}_{n}}(s)-r\dot{\psi}(s))-\phi_{n}(s)\Big)ds\bigg|\Fc(t)\bigg],$$ 
denoted as $\hat{M}_{n}$ \footnote{Note that $\hat{M}_{n}\in\Hci$ and in particular it has continuous paths, by martingale representation.}, through integration by parts we can rewrite \eqref{ks3} as:
\begin{equation*}
d\dot{\phi}_{n}(t)=e^{rt}d\hat{M}_{n}(t)+B\Big(\phi_{n}(t)-\pnfp(t)-\frac{2\bde\Si^{-1}\La}{N+1}(a^{\dot{\psi}_{n}}(t)-r\dot{\psi}(t))\Big)dt+r\dot{\phi}_{n}(t)dt.
\end{equation*}
By the preservation of the martingale property $d\widetilde{M}_{n}(t):=e^{rt}d\hat{M}_{n}(t)$ is a $\Fbb$-martingale with continuous paths and belongs in $\Hci$ (shown through integration by parts as in \eqref{ksref}). 
Hence, combining the above with $d\phi_{n}(t)=\dot{\phi}_{n}(t)dt$ leads to the desired FBSDE representation.

Conversely assume that \eqref{eq:mes} admits a unique solution, then integration by parts yields for $t\in[0,T]$:
\begin{equation}\label{ks4}
e^{-rt}\dot{\phi}_{n}(t)=\dot{\phi}_{n}(0)+\int_0^{t}e^{-rs}d\widetilde{M}_{n}(s)+\int_0^{t}e^{-rs}B\Big(\phi_{n}(s)-\pnfp(s)-\frac{2\bde\Si^{-1}\La}{N+1}(a^{\dot{\psi}_{n}}(s)-r\dot{\psi}(s))\Big)ds.
\end{equation}
Now, by the terminal condition $\dot{\phi}_{n}(T)=0$, we get:
\begin{equation}\label{ks5}
\dot{\phi}_{n}(0)=-\int_0^{T}e^{-rs}d\widetilde{M}_{n}(s)-\int_0^{T}e^{-rs}B\Big(\phi_{n}(s)-\pnfp(s)-\frac{2\bde\Si^{-1}\La}{N+1}(a^{\dot{\psi}_{n}}(s)-r\dot{\psi}(s))\Big)ds.
\end{equation}
Substituting \eqref{ks5} into \eqref{ks4}, rearranging and taking conditional expectations we get \eqref{ks3}.
\end{proof}

\begin{proof}[Proof of Proposition \ref{prop:fr_best_response_solution}]
The FBSDE of \eqref{eq:mes} is a part of larger class of linear FBSDEs that have the following form:
\begin{equation}\label{ks6}
\begin{aligned}
&dX(t)=\dot{X}(t)dt, \ \ \ X(0)=0, \\
&d\dot{X}(t)=dN(t)+B(X(t)-\xi(t))dt+r\dot{X}(t)dt, \ \ \ \dot{X}(T)=0,
\end{aligned}
\end{equation}
where $B$ is a real matrix with only positive eigenvalues, $r\geq 0$, $\xi\in\Hc$ and $N$ is a square integrable $\Fbb$-(continuous) martingale. The existence and uniqueness of this class of FBSDEs' solutions is studied in \cite[Theorem~A.4]{C}, where it is shown that the (pathwise) unique processes $(X,\dot{X})$ that solve \eqref{ks6} satisfy a linear differential equation with the following (unique) solution given through variation of parameters and (primary) matrix functions:
\begin{flalign}
&X(t)=\int_0^{t}e^{-\int_{s}^{t}F(u)du}\widetilde{\xi}(s)ds, \label{ks7}\\
&\dot{X}(t)=\widetilde{\xi}(t)-F(t)X(t), \label{ks8}
\end{flalign}
where:
\begin{flalign*}
&\Delta:=B+\frac{r^2}{4}I_{d}, \\
&G(t):=\cosh(\sqrt{\Delta}(T-t)), \\
&F(t):=-\Big(\Delta G(t)-\frac{r}{2}\dot{G}(t)\Big)^{-1}B\dot{G}(t), \\
&\widetilde{\xi}(t)=\Big(\Delta G(t)-\frac{r}{2}\dot{G}(t)\Big)^{-1}\Ebb\bigg[\int_{t}^{T}\Big(\Delta G(s)-\frac{r}{2}\dot{G}(s)\Big)Be^{-\frac{r}{2}(s-t)}\xi(s)ds\bigg|\Fc(t)\bigg],
\end{flalign*}
More precisely, we choose $\widetilde{\xi}$ to be the continuous version of $(\Delta G(t)-\frac{r}{2}\dot{G}(t))^{-1}\Ebb[\int_{t}^{T}(\Delta G(s)-\frac{r}{2}\dot{G}(s))Be^{-\frac{r}{2}(s-t)}\xi(s)ds|\Fc(t)]$, i.e.:
\begin{flalign*}
\Big(\Delta G(t)-\frac{r}{2}\dot{G}(t)\Big)^{-1}\Ebb\bigg[\int_{t}^{T}\Big(\Delta G(s)-\frac{r}{2}\dot{G}(s)\Big)Be^{-\frac{r}{2}(s-t)}\xi(s)ds\bigg|\Fc(t)\bigg]&=\\
\Big(\Delta G(t)-\frac{r}{2}\dot{G}(t)\Big)^{-1}e^{\frac{r}{2}t}\bigg(\bar{M}(t)-\int_{0}^{t}\Big(\Delta G(s)-\frac{r}{2}\dot{G}(s)\Big)Be^{-\frac{r}{2}s}\xi(s)ds\bigg),
\end{flalign*}
where $\bar{M}$ is a cadlag version of $\Ebb[\int_{0}^{T}(\Delta G(s)-\frac{r}{2}\dot{G}(s))Be^{-\frac{r}{2}s}\xi(s)ds|\Fc(t)]$ (ensured by the right continuity of the filtration), which in particular has continuous paths by martingale representation. 

Going back to \eqref{eq:mes} and applying the above we get that a sufficient condition for $\pfp\in\Hci$ is $\mathbb{TP}_{n}\in\Hci$, which holds since $\mathbb{TP}_{n}$ is a linear combination of processes in $\Hci$. To see that, initially note that $\widetilde{\mathbb{TP}}_{n}\in\Hci$ since $\mathbb{TP}_{n}\in\Hci$ and:
\begin{equation*}\sup_{t\in[0,T]}\|(\Delta G(t)-r/2\dot{G}(t))^{-1}\|_{2}<\infty,\end{equation*}
by \cite[Lemma B.2]{C} and the fact that:
\begin{equation*}\inf_{t\in[0,T]}x\cosh(x(T-t))+(r/2)x\sinh(x(T-t))\geq x>0,\quad\forall x\in(0,\infty).\end{equation*}
 In turn, these yield $\pfp\in\Hci$ as desired. Furthermore, we also get $\pfdp\in\Hci$ by \eqref{ks8} and the triangle inequality. Lastly, to see that the unique solution of the RDE: $Y(t):=\int_0^{t}e^{-\int_{s}^{t}F(u)du}\widetilde{\mathbb{TP}}_{n}(s)ds$ in \eqref{ks8} indeed satisfies the FBSDE of \eqref{eq:mes}, initially note that the terminal condition $\dot{Y}(T)=0$ holds since $\dot{G}(T)=0$. Writing \eqref{ks8} explicitly for $Y$, we have (naturally the arbitrary $\xi$ is also replaced with $\mathbb{TP}_{n}$ in $\bar{M}$):
\begin{flalign*}
\dot{Y}(t)&=\Big(\Delta G(t)-\frac{r}{2}\dot{G}(t)\Big)^{-1}e^{\frac{r}{2}t}\bigg(\bar{M}(t)-\int_{0}^{t}\Big(\Delta G(s)-\frac{r}{2}\dot{G}(s)\Big)Be^{-\frac{r}{2}s}\mathbb{TP}_{n}(s)ds\bigg)-F(t)Y(t).
\end{flalign*}
Multiplying with $(\Delta G(t)-\frac{r}{2}\dot{G}(t))e^{-\frac{r}{2}t}$ and using the fact that $B=\De-r^2/4$ we get:
\begin{flalign*}
\Big(\Delta G(t)-\frac{r}{2}\dot{G}(t)\Big)e^{-\frac{r}{2}t}\dot{Y}(t)&=\bar{M}(t)-\int_0^{t}\Big(\Delta G(s)-\frac{r}{2}\dot{G}(s)\Big)Be^{-\frac{r}{2}s}\mathbb{TP}_{n}(s)ds\\
&+\Big(\De\dot{G}(t)-\frac{r^2}{4}\dot{G}(t)\Big)e^{-\frac{r}{2}t}Y(t).
\end{flalign*}
Setting $\bar{Y}(t):=e^{-\frac{r}{2}t}Y(t)$, with $\dot{\bar{Y}}(t)=(-r/2)\bar{Y}(t)+e^{-\frac{r}{2}t}\dot{Y}(t)$ and $\bar{\mathbb{TP}}_{n}(t):=e^{-\frac{r}{2}t}\mathbb{TP}_{n}(t)$ we get:
\begin{flalign*}
\Big(\Delta G(t)-\frac{r}{2}\dot{G}(t)\Big)\dot{\bar{Y}}(t)&=\bar{M}(t)-\int_0^{t}\Big(\Delta G(s)-\frac{r}{2}\dot{G}(s)\Big)B\bar{\mathbb{TP}}_{n}(s)ds+\Big(\De\dot{G}(t)-\frac{r}{2}\De G(t)\Big)\bar{Y}(t).
\end{flalign*}
In differential form the above reads:
\begin{flalign*}
&\Big(\De \dot{G}(t)-\frac{r}{2}\ddot{G}(t)\Big)\dot{\bar{Y}}(t)dt+\Big(\De G(t)-\frac{r}{2}\dot{G}(t)\Big)d\dot{\bar{Y}}(t)=d\bar{M}(t)-\Big(\De G(t)-\frac{r}{2}\dot{G}(t)\Big)B\bar{\mathbb{TP}}_{n}(t)dt \\
&+\Big(\De\ddot{G}(t)-\frac{r}{2}\De\dot{G}(t)\Big)\bar{Y}(t)dt+\Big(\De\dot{G}(t)-\frac{r}{2}\De G(t)\Big)\dot{\bar{Y}}(t)dt.
\end{flalign*}
Noting that $\ddot{G}(t)=\De G(t)$, as well as that $\De$ commutes with $G(t)$ and $ \dot{G}(t)$, ``multiplying" with $(\De G(t)-r/2\dot{G}(t))^{-1}$ and using associativity we get:
\begin{flalign*}
d\dot{\bar{Y}}(t)=\Big(\De G(t)-\frac{r}{2}\dot{G}(t)\Big)^{-1}d\bar{M}(t)-B\bar{\mathbb{TP}}_{n}(t)dt+\De\bar{Y}(t)dt.
\end{flalign*}
Now, $\bar{M}$ is a continuous martingale that belongs in $\Hci$ (since $\bar{\mathbb{TP}}_{n}$ does) and $\sup_{t\in[0,T]}\|(\Delta G(t)-r/2\dot{G}(t))^{-1}\|_{2}<\infty$. Hence, by martingale preservation $dM(t):=(\Delta G(t)-r/2\dot{G}(t))^{-1}d\bar{M}(t)$ is a continuous martingale that belongs in $\Hci$ (the latter can be shown similarly as \eqref{ksref}, through integration by parts). Furthermore, using the definition of $\De$ we get:
\begin{flalign*}
d\dot{\bar{Y}}(t)=dM(t)+B(\bar{Y}(t)-\bar{\mathbb{TP}}_{n}(t))dt+\frac{r^2}{4}\bar{Y}(t)dt.
\end{flalign*}
Using the fact that $d\dot{\bar{Y}}(t)=(-r/2)\dot{\bar{Y}}(t)dt+e^{-\frac{r}{2}t}d\dot{Y}(t)-r/2e^{-\frac{r}{2}t}\dot{Y}(t)dt$ and ``multiplying" by $e^{\frac{r}{2}t}$ we get:
\begin{flalign*}
d\dot{Y}(t)=e^{\frac{r}{2}t}dM(t)+B(Y(t)-\mathbb{TP}_{n}(t))dt+r\dot{Y}(t)dt.
\end{flalign*}
Defining $d\widetilde{M}(t):=e^{\frac{r}{2}t}dM(t)$ and arguing as above, we have that $\widetilde{M}$ is a continuous martingale that belongs in $\Hci$. Hence $Y$ does indeed satisfy the dynamics of \eqref{eq:mes}, with $\widetilde{M}$ determined as part of the solution. This completes the proof.
\end{proof}

\begin{proof}[Proof of Corollary \ref{cor:fr_best_response_return}]
Under transaction costs, the market-clearing condition holds for both the investors' demands and trading rates. Now, following similar steps as the ones we did to derive the price-impact process \eqref{ks1}, we may use \eqref{eq:mes} for the strategic investor $n$ and get:
\begin{equation*}
\nu(t)=\frac{N^2}{N^2-1}\m(t)-\frac{\Si}{\bde(N^2-1)}\z_{n}(t)+\frac{2N\La}{N^2-1}(a^{\dot{\psi}}(t)-r\dot{\psi}(t)),
\end{equation*}
where $\m$ is the frictionless competitive equilibrium returns. Uniqueness follows from the uniqueness of the frictional best-response strategy for investor $n$, as well as from the uniqueness of the competitive frictional strategies for the rest of the investors. Now, recalling \eqref{eq:best_response_return} and noting that the above is a linear combination of processes in $\Hci$, we get the desired result.
\end{proof}
\begin{proof}[Proof of Lemma \ref{lem:ks}]
Note that for 
$$\phi=\begin{bmatrix}\phi_1^{'} & \cdots & \phi_{N}^{'}\end{bmatrix}^{'},\quad \dot{\phi}=\begin{bmatrix}\dot{\phi}_1^{'} & \cdots & \dot{\phi}_{N}^{'}\end{bmatrix}^{'}\quad \text{and}\quad \widetilde{M}=\begin{bmatrix}\widetilde{M}_1^{'} & \cdots & \widetilde{M}_{N}^{'}\end{bmatrix}^{'}$$
the system \eqref{eq:Nash_system_fr} can be rewritten as:
\begin{equation}\label{ks10}
\begin{aligned}
&d\phi(t)=\dot{\phi}(t)dt, \\
&d\dot{\phi}(t)=d\widetilde{M}(t)+\mathbf{B}(\phi(t)-\mathbf{B}^{-1}Z(t))+r\dot{\phi}(t),
\end{aligned}
\end{equation}
where:
\begin{flalign*}
&\mathbf{B}:=\begin{bmatrix}
B       && \frac{B}{N+1} && \ldots  && \frac{B}{N+1} \\
\frac{B}{N+1} &&  B      &&\ddots  &&\vdots\\
\vdots  &&\ddots  &&B       && \frac{B}{N+1} \\
\frac{B}{N+1} &&\ldots  && \frac{B}{N+1} && B
\end{bmatrix}\in\mathbb{R}^{dN\times dN} \text{ and } \\
&Z:=\begin{bmatrix}\Big(B\frac{(N-1)\bde\Si^{-1}\nu-\psi-(N-1)\z_i+2\bde\Si^{-1}\La(a^{\dot{\psi}}-r\dot{\psi})}{N+1}\Big)^{'} \end{bmatrix}^{'}_{i\in\II}.
\end{flalign*}
Now note that for:
\begin{flalign*}
&C:=\begin{bmatrix}
1       && \frac{1}{N+1} && \ldots  && \frac{1}{N+1} \\
\frac{1}{N+1} &&  1      &&\ddots  &&\vdots\\
\vdots  &&\ddots  &&1       && \frac{1}{N+1} \\
\frac{1}{N+1} &&\ldots  && \frac{1}{N+1} && 1
\end{bmatrix}\in\mathbb{R}^{N\times N}, \\
&\mathbf{B}=C\otimes B,
\end{flalign*}
where $\otimes$ denotes the Kronecker product. By the above, $\mathbf{B}^{-1}$ is well defined as $\det(\mathbf{B})=\det(C\otimes B)=\det(C)^{d}\det(B)^{N}\neq 0$, since $C$ is real, symmetric, strictly diagonally dominant with positive diagonal entries (and hence symmetric positive definite) and $B$ is the product of two symmetric positive definite matrices. Furthermore, the eigenvalues, $(b_{i})_{i\in\II}$, of $B$ are all positive (as the product of two symmetric positive definite matrices) and the eigenvalues, $(c_{j})_{i\in\II}$ of $C$ are again positive (as a symmetric positive definite matrix). Therefore, the eigenvalues of $\mathbf{B}$ given as: $b_{i}c_{j}, \ i,j\in\II$ are also positive. We also have that $\widetilde{M}\in\Hci$ and $\mathbf{B}^{-1}Z\in\Hci$ as $\nu,\z_{m}\in\Hci,$ for each $m\in\II$ and $\psi\in\Wc_{\Ac}$. So, \eqref{ks10} is of the form of \eqref{ks6} (assuming no dependence of $\nu$ on $\phi_{m},\dot{\phi}_{m}, \ m\in\II$) and thus it admits a unique solution satisfying the conditions of $\Wc_{\Ac}$ (which can be directly shown, following the proof of Proposition \ref{prop:fr_best_response_solution}). 
\end{proof}
\begin{proof}[Proof of Theorem \ref{thm:fr_Nash}]
Let $(\pfnm,\pfdnm), \ m\in\II$ be the optimal demand and trading rate of \eqref{eq:Nash_system_fr} whose existence and uniqueness is shown by Lemma \ref{lem:ks}. In order for the market clearing to hold, the following equilibrium conditions must be satisfied at all times: 
\begin{align}
&\sum_{m=1}^{N}\pfnm(t)+\psi(t)=0, \label{eq:4.17} \\
&\sum_{m=1}^{N}\pfdnm(t)+\dot{\psi}(t)=0. \label{eq:4.18}
\end{align}
More analytically, \eqref{eq:4.18} develops as:
\begin{flalign*}
0&=\sum_{m=1}^{N}d\widetilde{M}_{m}(t)+B\sum_{m=1}^{N}\pfnm(t)dt\\
&-B\sum_{m=1}^{N}\frac{\sum_{\substack{i=1\\i\neq m}}^{N}(\bde\Si^{-1}\nu(t)-\breve{\phi}_{\La,i}(t))-\psi(t)-(N-1)\z_{m}(t)+B^{-1}(a^{\dot{\psi}}(t)-r\dot{\psi}(t))}{N+1}dt \\
&+r\sum_{m=1}^{N}\pfdnm(t)dt+a^{\dot{\psi}}(t)dt+dM^{\dot{\psi}}(t).
\end{flalign*}
Using \eqref{eq:4.17} we get:
\begin{flalign*}
0&=\overbrace{\sum_{m=1}^{N}d\widetilde{M}_{m}(t)+dM^{\dot{\psi}}(t)}^{=:d\widetilde{M}^{S}(t)}-B\psi(t)dt-r\dot{\psi}(t)dt+a^{\dot{\psi}}(t)dt \\
&-B\sum_{m=1}^{N}\frac{(N-1)\bde\Si^{-1}\nu(t)+\pfnm(t)-(N-1)\z_{m}(t)+B^{-1}(a^{\dot{\psi}}(t)-r\dot{\psi}(t))}{N+1}dt.
\end{flalign*}
Using \eqref{eq:4.17} once more, we get:
\begin{flalign*}
0&=d\widetilde{M}^{S}(t)+\frac{(N-1)B\z(t)}{N+1}dt-\frac{NB\psi(t)}{N+1}dt+\frac{a^{\dot{\psi}}(t)-r\dot{\psi}(t)}{N+1}dt-\frac{BN(N-1)\bde\Si^{-1}\nu(t)}{N+1}dt
\end{flalign*}
Arguing as in the proof of Lemma \ref{lem:fr_objective_impact} for \eqref{ks1}, $d\widetilde{M}^{S}(t)$ vanishes. The above is rewritten as:
\begin{flalign*}
&\frac{BN(N-1)\bde\Si^{-1}\nu(t)}{N+1}=\frac{(N-1)B(\z(t)-\psi(t))}{N+1}-\frac{B\psi(t)}{N+1}+\frac{a^{\dot{\psi}}(t)-r\dot{\psi}(t)}{N+1}
\end{flalign*}
and by rearranging the terms we get the desired result. 

The uniqueness of the frictional Nash equilibrium returns is a direct consequence to the uniqueness of the solution set $(\pfnm,\pfdnm), \ m\in\II$. Lastly, the frictional Nash equilibrium is in $\Hci$ by linearity and the fact that $\z_{m}\in\Hci$, for each $m\in\II$ and $\psi\in\Wc_{\Ac}$.
\end{proof}
\begin{proof}[Proof of Theorem \ref{thm:fr_Nash_two_inv}]
The steps for this proof are similar to the ones of Lemma \ref{lem:fr_objective_impact}, Proposition \ref{prop:mes}, Proposition \ref{prop:fr_best_response_solution}, Lemma \ref{lem:ks} and Theorem \ref{thm:fr_Nash}. Thus, we omit a part of the analysis and refer to the original proofs for more details. 

We first consider the price-impact process of investor $1$. As in Lemma \ref{lem:fr_objective_impact}, the respective form of \eqref{ks1} in this case is given as:
\begin{equation*}
\nu(t;\phi_1,\dot{\phi}_1)=2\La a^{\dot{\phi}_1}(t)-2r\La\dot{\phi}_1(t)+\frac{\Si}{\de_2}(\z_2(t)-\phi_1(t)-\psi(t))+2\La(a^{\dot{\psi}}(t)-r\dot{\psi}(t)).
\end{equation*}
Using similar arguments as with \eqref{ks2} and \eqref{ks2.5} in Lemma \ref{lem:fr_objective_impact}, the adjusted frictional objective functional under investor $1$'s price impact becomes:
\begin{flalign*}
\widetilde{\Fc}_{\La,1}(\dot{\phi})&:=\Ebb\bigg[\int_0^{T}e^{-rt}\bigg((\phi_{1}(t))^{'}\Big(\frac{\Si(\z_{2}(t)-\phi_{1}(t)-\psi(t))}{\de_2}+2\La(a^{\dot{\psi}}(t)-r\dot{\psi}(t))\Big)  \\
&-\frac{1}{2\de_1}(\phi_{1}(t)+\zeta_{1}(t))^{'}\Si (\phi_{1}(t)+\zeta_{1}(t))-(\dot{\phi}_{1}(t))^{'}\Big(\La+2\La\Big)\dot{\phi}_{1}(t)\bigg)dt\bigg].
\end{flalign*}
Optimizing $\widetilde{\Fc}_{\La,1}(\dot{\phi})$, as in Proposition \ref{prop:mes}, we get that the unique solution of the above functional is characterized by the following FBSDE:
\begin{equation*}
\begin{aligned}
&d\widetilde{\phi}_{\La,1}(t)=\dot{\widetilde{\phi}}_{\La,1}(t)dt, \ \ \ \widetilde{\phi}_{\La,1}(0)=0, \\
&d\dot{\widetilde{\phi}}_{\La,1}(t)=d\widetilde{M}_{1}(t)+B_1(\widetilde{\phi}_{\La,1}(t)-\chi_{1}(t))dt+r\dot{\widetilde{\phi}}_{\La,1}(t)dt, \ \ \ \dot{\widetilde{\phi}}_{\La,1}(T)=0,
\end{aligned}
\end{equation*}
where: 
\begin{equation*}
\Hci\ni\chi_{1}:=\pnfp_{1}+\frac{2\de_1\de_2\Si^{-1}\La}{\de+\de_1}(a^{\dot{\psi}_{n}}-r\dot{\psi}),\qquad B_1:=\frac{\de+\de_1}{3\de_1\de_2}\frac{\La^{-1}\Si}{2},
\end{equation*}
$\pnfp_{1}$ is the best-response strategy of investor $1$ in frictionless case and $\widetilde{M}_{1}$ is a continuous MG that belongs in $\Hci$. In fact, noting that $B_1$ has only positive eigenvalues and following similar arguments as in Proposition \ref{prop:fr_best_response_solution}, the above FBSDE admits a unique (pathwise) solution. Using the investors' symmetry and recalling the characterization of the Nash equilibrium returns in \S\ref{sec:fr_Nash}, we get the following system for the frictional Nash optimal demands of the investors $ m =1, 2$:
\begin{equation}\label{eq:Nash_opt_demand_two}
\begin{aligned}
&d\pfnm(t)=\pfdnm(t)dt, \ \ \ \pfnm(0)=0, \\
&d\pfdnm(t)=d\widetilde{M}_{m}(t)+B_{m}(\pfnm(t)-\breve{\chi}_{m}(t))dt+r\pfdnm(t)dt, \ \ \ \pfdnm(T)=0,
\end{aligned}
\end{equation}
where $B_{m}:=(\de+\de_{m})\La^{-1}\Si/6\de_{m}\de_{-m}$ and:
\begin{equation*}
\breve{\chi}_{m}:=\frac{\la_{m}\Big(\overbrace{\sum\limits_{i=1,i\neq m}^{2}(\de_{i}\Si^{-1}\nu-\breve{\phi}_{\La,i})}^{\z_{-m}(\breve{\phi}_{\La})}-\psi\Big)-\la_{-m}\z_{m}}{\la_{m}+1}+\frac{2\de_{m}\de_{-m}\Si^{-1}\La}{\de+\de_{m}}(a^{\dot{\psi}_{n}}-r\dot{\psi}).
\end{equation*}
The key ingredient to derive the frictional Nash equilibrium returns in this setting is to derive the solution of \eqref{eq:Nash_opt_demand_two}. Note that for $\phi=\begin{bmatrix}\phi_1^{'} & \phi_{2}^{'}\end{bmatrix}^{'}$, $\dot{\phi}=\begin{bmatrix}\dot{\phi}_1^{'} &  \dot{\phi}_{2}^{'}\end{bmatrix}^{'}$ and $\widetilde{M}=\begin{bmatrix}\widetilde{M}_1^{'} & \widetilde{M}_{2}^{'}\end{bmatrix}^{'}$ the system of \eqref{eq:Nash_opt_demand_two} can be rewritten as:
\begin{equation*}
\begin{aligned}
&d\phi(t)=\dot{\phi}(t)dt, \\
&d\dot{\phi}(t)=d\widetilde{M}(t)+\mathbf{B}(\phi(t)-\mathbf{B}^{-1}Z(t))+r\dot{\phi}(t),
\end{aligned}
\end{equation*}
where:
\begin{flalign*}
&\mathbf{B}:=\begin{bmatrix}
\frac{\de+\de_1}{3\de_1\de_2}\frac{\La^{-1}\Si}{2} && \frac{\de_1}{3\de_1\de_2}\frac{\La^{-1}\Si}{2} \\
\frac{\de_2}{3\de_1\de_2}\frac{\La^{-1}\Si}{2} && \frac{\de+\de_2}{3\de_1\de_2}\frac{\La^{-1}\Si}{2}
\end{bmatrix}\in\mathbb{R}^{2d\times 2d}, \\
&Z:=\begin{bmatrix}\Big(B_1\frac{\de_1(\de_2\Si^{-1}\nu-\psi)-\de_2\z_1+2\de_1\de_2\Si^{-1}\La(a^{\dot{\psi}}-r\dot{\psi})}{\de+\de_1}\Big)^{'} &  \Big(B_2\frac{\de_2(\de_1\Si^{-1}\nu-\psi)-\de_1\z_2+2\de_1\de_2\Si^{-1}\La(a^{\dot{\psi}}-r\dot{\psi})}{\de+\de_2}\Big)^{'}\end{bmatrix}^{'}.
\end{flalign*}
Now note that for:
\begin{flalign*}
&C:=\begin{bmatrix}
\de+\de_1 && \de_1 \\
\de_2 && \de+\de_2
\end{bmatrix}\in\mathbb{R}^{2\times 2}, \\
&\mathbf{B}=C\otimes \frac{\La^{-1}\Si}{6\de_1\de_2}.
\end{flalign*}
But this is the setting studied in Proposition \ref{prop:mes} (assuming that there is not a dependence of $\nu$ on $\phi_{m},\dot{\phi}_{m}, \ m\in\II$). Hence, following similar steps we get that \eqref{eq:Nash_opt_demand_two} has a (pathwise) unique solution. 

Let $(\pfnm,\pfdnm), \  m \in\II$ be the unique optimal demand and trading rate of \eqref{eq:Nash_opt_demand_two}. In order for the market clearing to hold \eqref{eq:4.17} and \eqref{eq:4.18}, as shown in Theorem \ref{thm:fr_Nash} must be satisfied at all times. More precisely, using the above, \eqref{eq:4.18} reads: 
\begin{flalign*}
0&=\overbrace{\sum\limits_{i=1}^{2}d\widetilde{M}_{i}(t)+dM^{\dot{\psi}}(t)}^{=:dM^{S}}+B_1(\breve{\phi}_{\La,1}(t)-\breve{\chi}_1(t))dt+B_2(\breve{\phi}_{\La,2}(t)-\breve{\chi}_2(t))dt+r(\dot{\breve{\phi}}_{\La,1}(t)\\
&+\dot{\breve{\phi}}_{\La,2}(t))dt+a^{\dot{\psi}}(t)dt.
\end{flalign*}
Using \eqref{eq:4.17}, we have:
\begin{flalign*}
0&=dM^{S}(t)+\frac{\de\La^{-1}\Si}{3\de_1\de_2}(\breve{\phi}_{\La,1}(t)+\breve{\phi}_{\La,2}(t))dt+\frac{\La^{-1}\Si}{6\de_1\de_2}(\de_2\z_1(t)+\de_1\z_2(t))dt+\frac{\de\La^{-1}\Si}{6\de_1\de_2}\psi(t)dt\\
&+\frac{a^{\dot{\psi}}(t)-r\dot{\psi}(t)}{3}dt-\frac{\La^{-1}\nu(t)}{3}dt.
\end{flalign*}
Working as in the proof of Lemma \ref{lem:fr_objective_impact} for \eqref{ks1}, $d\widetilde{M}^{S}(t)$ vanishes. Hence, we get:
\begin{flalign*}
\nu(t)&=\frac{\Si}{2\de_1\de_2}(\de_2\z_1(t)+\de_1\z_2(t))-\frac{\de\Si}{2\de_1\de_2}\psi(t)+\La(a^{\dot{\psi}}(t)-r\dot{\psi}(t)) \\
&=\frac{\Si}{\de}\frac{\la_2\z_1(t)+\la_1\z_2(t)}{2\la_1\la_2}-\frac{\Si}{\de}\frac{\psi(t)}{2\la_1\la_2}+\La(a^{\dot{\psi}}(t)-r\dot{\psi}(t)).
\end{flalign*}
Noting that $\z-\la_1\z_1-\la_2\z_2=\la_2\z_1+\la_1\z_2$ and $1-\la_1^2-\la_2^2=2\la_1\la_2$ yields the desired result. The uniqueness of the frictional Nash equilibrium returns is a direct consequence to the uniqueness of the solution set $(\pfnm,\pfdnm), \  m\in\II$. Lastly, the frictional Nash equilibrium is in $\Hci$ by linearity and the fact that $\z_{1},\z_2\in\Hci$ and $\psi\in\Wc_{\Ac}$.
\end{proof}

\bigskip

\bibliographystyle{amsalpha}
\bibliography{paperbib}
\end{document}